\documentclass{IEEEtran}
\usepackage{fix-cm}
\usepackage{amsmath,amsfonts,amssymb,amsthm}
\usepackage{algorithmic}
\usepackage{algorithm}
\usepackage{booktabs}
\usepackage{array}
\usepackage[caption=false,font=normalsize,labelfont=sf,textfont=sf]{subfig}
\usepackage{textcomp}
\usepackage{stfloats}
\usepackage{url}
\usepackage{verbatim}
\usepackage{graphicx}
\usepackage{cite}
\usepackage{xcolor}
\usepackage{flushend}

\newcommand{\eqdef}{\triangleq}
\newcommand{\Zset}{\mathbb{Z}}
\newcommand{\Cset}{\mathbb{C}}
\newcommand{\Tu}{T_{\text{u}}}
\newcommand{\Ts}{T_{\text{s}}}
\newcommand{\Tg}{T_{\text{cp}}}
\newcommand{\Tgamma}{T_{\gamma}}
\newcommand{\Tr}{T_{\text{r}}}

\newcommand{\fc}{f_{\text{c}}}
\newcommand{\fs}{f_{\text{s}}}

\newcommand{\sinc}{\text{sinc}}
\newcommand{\rect}{\text{rect}}
\newcommand{\trasp}{\text{T}}
\newcommand{\herm}{\text{H}}
\newcommand{\eb}{\mathbf{e}}
\newcommand{\eh}{\eb_h}
\newcommand{\gammab}{\boldsymbol{\gamma}}
\newcommand{\Gammab}{\boldsymbol{\Gamma}}
\newcommand{\Hb}{\mathbf{H}}
\newcounter{mytempeqncnt}

\newcounter{assumption}
\renewcommand{\theassumption}{a\arabic{assumption}}
\newcommand{\assumption}[2][]{%
    \refstepcounter{assumption}%
    \ifx\relax#1\relax\else\label{#1}\fi
    \textbf{(\theassumption)}~#2%
}

\newtheorem{theorem}{Theorem}
\newtheorem{remark}{Remark}

\begin{document}

\title{Physically Consistent Modeling of Dispersive
Time-Modulated Reconfigurable Intelligent Surfaces
for Wideband OFDM}

\author{
Ivan Iudice, \IEEEmembership{Senior Member,~IEEE},
Donatella Darsena, \IEEEmembership{Senior Member,~IEEE},
Giacinto Gelli, \IEEEmembership{Senior Member,~IEEE},
and Vincenzo Galdi, \IEEEmembership{Fellow,~IEEE}%
\thanks{I. Iudice is with the Security and Cybersecurity
in Complex Systems and Air Traffic Management Unit,
Italian Aerospace Research Centre, Capua (CE) 81043, Italy.}%
\thanks{D. Darsena and G. Gelli are with the
Department of Electrical Engineering and Information Technology,
University Federico II, Napoli 80125, Italy.}%
\thanks{V. Galdi is with the Fields \& Waves Lab,
Department of Engineering, University of Sannio, Benevento 82100, Italy.}%
\thanks{The work of I. Iudice was partially supported by the Italian
National Project ``Maturazione Tecnologie Innovative Mini
e Micro Droni (MATIM)'' through Programma Nazionale di
Ricerche Aerospaziali (PRORA) under Grant DM662.}}

\maketitle

\begin{abstract}
The elements of a reconfigurable intelligent surface (RIS)
are commonly modeled either as frequency-selective
time-invariant reflectors or as 
instantaneous time-varying complex-valued reflection coefficients.
However, in practical implementations, 
time-modulated metasurfaces simultaneously exhibit
frequency selectivity and periodic time
variation. This paper develops a physically consistent linear
periodically time-varying (LPTV) model that jointly captures
these effects and characterizes their impact on wideband
orthogonal frequency-division multiplexing (OFDM) communications.
Starting from a canonical equivalent-circuit representation of
a generic RIS element, we derive a single-resonance
model whose physically meaningful parameters
determine both the frequency-selective reflection coefficient
and the effective impulse-response duration. This circuit-based
description therefore characterizes explicitly
the finite memory introduced by the RIS element.
The periodically switched dispersive responses are subsequently
represented through harmonic transfer functions, leading to a
closed-form per-subcarrier OFDM input--output relation. The
resulting harmonic coupling 
is generally non-diagonal: each received
subcarrier collects contributions from multiple transmitted
subcarriers through the RIS harmonics, with each contribution
weighted by the element response evaluated at the corresponding
absolute input frequency.
We further derive a generalized cyclic-prefix (CP) condition
requiring the guard interval to accommodate the sum of the
propagation-channel delay spread and the RIS memory. Under this
condition, intersymbol interference and out-of-grid spectral
leakage are suppressed, while deterministic on-grid harmonic
coupling remains.
Full-wave simulations of a unit cell from the OpenRIS repository,
designed for a 5G NR channel, validate the proposed resonant model.
They also reveal appreciable in-band dispersion
even though the unit cell provides nearly ideal phase switching
between two discrete phase states.
Simulations over 3GPP tapped-delay-line (TDL) channels
confirm the generalized CP condition
and illustrate its relevance for high-quality-factor RIS elements.
\end{abstract}

\begin{IEEEkeywords}
Reconfigurable intelligent surface (RIS), metasurface, OFDM, wideband,
linear periodically time-varying (LPTV) systems, frequency selectivity,
harmonic generation, cyclic prefix.
\end{IEEEkeywords}


\section{Introduction}
\IEEEPARstart{R}{econfigurable} intelligent 
surfaces (RISs) have emerged as a promising
technology for shaping the wireless 
propagation environment in beyond-5G
and future 6G networks~\cite{Wu2020,DiRenzo2020}. 
An RIS is typically composed of a large number of individually 
reconfigurable sub-wavelength 
elements or unit cells, whose collective response determines 
the overall transformation applied by the surface to the 
incident electromagnetic (EM) field.

Most signal processing-oriented 
models for RIS elements rely on one of two
idealized abstractions.
In the first,  
each element 
is modeled as a simple phase shifter 
with an instantaneous or frequency-flat response
\cite{Basar2019,Wu2019}.
Wideband extensions refine this abstraction by introducing 
a frequency-selective but time-invariant 
reflection coefficient~\cite{Cai2020,Abbas2025,Li2025} or,
in the context of 
orthogonal frequency-division multiplexing (OFDM), 
an explicit per-element delay
constrained to lie within 
the OFDM cyclic prefix (CP) duration~\cite{An2021}.
A signal-processing framework accounting for the 
filtering performed by each element was developed in \cite{Bjornson2022}.
However, under its narrowband-element assumption, 
the element impulse response reduces to a scaled 
and delayed Dirac pulse, 
thereby neglecting the finite memory 
of a unit cell.
The second abstraction applies
to time-modulated  RISs, 
whose elements are periodically reconfigured 
to generate spectral harmonics ~\cite{Zhang2018,Xiao2024},
for applications such as index modulation, beamforming, 
wireless transmission, and sensing.
Within this framework, each RIS element is 
modeled  as an instantaneous (i.e., frequency-flat)  
time-varying system \cite{Zhang2018,Dai2018}.

Physically feasible metasurfaces violate both idealizations. 
Their elements are inherently 
dispersive due to finite memory and, 
when periodically reconfigured, 
exhibit a periodically 
time-varying behavior. 
In the electromagnetic literature, these two phenomena 
have been jointly described at the field level using 
Floquet analysis with  
dispersive surface susceptibilities~\cite{tiu2021}.
A substantial body of work has addressed
physically consistent EM models of RISs,
based on multiport network theory
(impedance, admittance and scattering parameters),
polarizability descriptions, and equivalent circuits,
with extensions to mutual coupling and beyond-diagonal architectures.
Such models are typically developed under a time-invariant narrowband 
assumption,
and a recent survey of this literature \cite{Bidabadi2025}
explicitly identifies wideband operation
as one of the least explored directions within this framework.
Time variation is considered in \cite{Bjornson2022}
only as an uncontrolled effect due to terminal mobility,
described through linear time-varying (LTV) 
system theory and possibly compensated
by the surface itself, whereas the deliberate periodic reconfiguration
of the elements is listed there among the open directions
whose input--output description remains to be established.
However, to the best 
of our knowledge,
a signal processing-oriented 
input-output model that 
simultaneously captures both effects and 
enables the derivation of the corresponding 
per-subcarrier relation observed 
at a wideband OFDM receiver is still lacking.

In this paper, we bridge this 
gap by modeling each RIS element 
as a \emph{linear periodically time-varying} (LPTV) system 
characterized by  a two-dimensional reflection kernel 
and by deriving  the corresponding input-output relation when
the RIS is employed in a wideband OFDM system. 
The main contributions of this 
work are summarized as follows: 
(i) we characterize the finite-memory property 
of each RIS element using  
a simple physics-based circuit model;
(ii) we derive a Fourier-series 
representation of  the dispersive 
element response that  
preserves its frequency selectivity; 
(iii) we obtain a closed-form per-subcarrier 
input-output relation for OFDM signals 
reflected by a time-varying
RIS;
and (iv) we derive a generalized
CP condition (Theorem 1), showing 
that the CP must accommodate
the combined delay spread introduced by the 
propagation channel and the RIS memory to
suppress intersymbol interference (ISI) 
and channel-induced 
intercarrier interference (ICI), leaving as sole
residual impairment the structured ICI
inherent to the
RIS periodic reconfiguration.

The remainder of this paper is organized 
as follows. 
Section~\ref{sec:lptv-model}
introduces the proposed LPTV 
model for dispersive time-varying RIS elements. 
Section~\ref{sec:ofdm-analysis} derives the 
corresponding input-output relation for 
wideband OFDM signals and establishes 
the generalized CP condition. 
Section~\ref{sec:simulations} presents numerical results validating 
the proposed analysis. 
Finally, Section~\ref{sec:concl} concludes the paper.

\textit{Notation:} Boldface letters denote vectors and matrices;
$(\cdot)^*$, $(\cdot)^\trasp$, and $(\cdot)^\herm$ denote conjugation,
transposition, and conjugate transposition, respectively;
$\Zset$ and $\Cset$ are the sets of integer and complex numbers,
respectively; $\delta(\cdot)$ is the Dirac delta,
$\text{u}(\cdot)$ the unit step,
and $\rect(t)$ the rectangular window
equal to one for $t\in[-1/2,1/2)$ and zero otherwise;
finally, $\sinc(x)\eqdef\sin(\pi x)/(\pi x)$.

\IEEEpubidadjcol

\section{LPTV Modeling of the RIS Elements}
\label{sec:lptv-model}

Let us consider a RIS comprising $Q$ elements, 
and denote with $x_{\text{bp},q}(t)$ 
and $y_{\text{bp},q}(t)$, respectively, 
the real-valued \emph{bandpass} (bp) input and output signals
at the $q$-th RIS element,
both assumed to be bandpass signals with carrier
frequency $\fc$.
Throughout the paper, the $e^{j2\pi f t}$
time-harmonic convention is adopted,
consistently with the Fourier transform
$X(f) = \mathcal{F}[x(t)] \eqdef \int_{-\infty}^{+\infty}
x(t)\, e^{-j2\pi f t}\, dt$.
Following the classical theory of LTV
systems~\cite{Zadeh1950,Zadeh1961}, the related
signal-processing formulations \cite{GelliVerde2002},
and the time--frequency representation 
introduced by Bello~\cite{Bello1963}\footnote{
See also \cite{Bjornson2022} for the use
of the same formalism in RIS-aided links under receiver mobility.}, 
the response of the $q$th element 
is modeled as 
\begin{equation}
y_{\text{bp},q}(t) = \int_{-\infty}^{+\infty}
\Gamma_q(f,t) \, X_{\text{bp},q}(f) \,
e^{j2\pi f t} \, df, 
\label{eq:tvfilter}
\end{equation}
where  $X_{\text{bp},q}(f) = \mathcal{F}[ x_{\text{bp},q}(t)]$.
The two-dimensional transfer function  
$\Gamma_q(f,t)$ jointly captures  
the frequency-selective response of the element
(through its dependence on $f$) 
and the intentional time variation 
(through its dependence on $t$)
induced by RIS reconfiguration.

Using the complex-envelope representation
$x_{\text{bp},q}(t) = \text{Re} \left\{ x_q(t)
\, e^{j2\pi \fc t} \right\}$,
with $x_q(t)$ denoting the complex baseband signal,
whose spectrum $X_q(f)=\mathcal{F}[x_q(t)]$ is assumed to vanish
outside the band $0 \le f \le B$,
with $B \ll \fc$ denoting the signal bandwidth,
the input bandpass spectrum in \eqref{eq:tvfilter}
can be expressed as
$
X_{\text{bp},q}(f)
=
\frac{1}{2}X_q(f-\fc)
+
\frac{1}{2}X_q^{*}(-f-\fc).
$
Exploiting the Hermitian symmetry
$\Gamma_q^{*}(f,t)=\Gamma_q(-f,t)$,
which follows from the assumption that
the impulse response of the
reflective element is real-valued
(see also discussion in Section \ref{sec:one-pole} and 
Appendix~\ref{sec:surf-response}),
the output complex baseband signal
$y_q(t)$, defined through
$y_{\text{bp},q}(t)
=
\text{Re}\left\{
y_q(t) \, e^{j2\pi \fc t}
\right\}$,
can be expressed as
\begin{equation}
y_q(t)
=
\int_{-\infty}^{+\infty}
\Gamma_q(\nu+\fc,t)
X_q(\nu)
e^{j2\pi\nu t}
\,d\nu.
\label{eq:bb}
\end{equation}

\subsection{Harmonic Representation}

Similarly to the space--time coding metasurface architecture proposed in \cite{Zhang2018},
let us assume that each RIS element is driven by a periodic 
control signal of period $T=K\Tr$, 
which is piecewise constant over $K$ 
slots of duration $\Tr$.
Under the \textit{quasi-static approximation} 
(see Remark~\ref{rem:quasi-static}) the element response within
each modulation slot can be approximated by the steady-state transfer
function associated with the corresponding control state.
Consequently, $\Gamma_q(f,t)$ is periodic in $t$
and can be expressed as
\begin{equation}
\Gamma_q(f,t) = \sum_{n\in\Zset}
\sum_{k=0}^{K-1} \Gamma_q^{[k]}(f) \,
\rect\left(\frac{t-k\Tr-nT-\Tr/2}{\Tr}\right),
\label{eq:piecewise}
\end{equation}
where $\Gamma_q^{[k]}(f)$ 
denotes the frequency-selective
steady-state reflection coefficient
associated with the control state applied 
during the $k$-th modulation slot.
%
%


\begin{remark}[Quasi-static approximation]
\label{rem:quasi-static}
The piecewise-constant representation in \eqref{eq:piecewise}
implicitly assumes that, within each control slot,
the RIS element operates in a quasi-static regime 
described by the LTI response $\Gamma_q^{[k]}(f)$.
This approximation is valid provided that the transient
generated by each switching event has essentially 
vanished before the subsequent state transition.
\end{remark}

\noindent
Further details on 
the steady-state condition 
and its relationship with the classical
\textit{adiabatic condition}~\cite{Zhang2018} are 
given in Remark~\ref{rem:steady-state} later on.
Since $\Gamma_q(f,t)$ is periodic in
$t$ with period $T$, it admits 
the Fourier-series expansion
\begin{equation}
\Gamma_q(f,t)=\sum_{h\in\Zset} b_q^{[h]}(f) \,
e^{j2\pi\frac{h}{T}t},
\label{eq:fs}
\end{equation}
where the Fourier coefficients are given by
\begin{equation}
b_q^{[h]}(f) = \frac{e^{-j\pi\frac{h}{K}}}{K} \,
\sinc\left(\frac{h}{K}\right) \sum_{k=0}^{K-1}
\Gamma_q^{[k]}(f) \, e^{-j\frac{2\pi}{K}hk} \: .
\label{eq:fs-inv}
\end{equation}
For notational convenience, let us define the vectors
$\Gammab_q(f) \eqdef \left[\Gamma_q^{[0]}(f),
\Gamma_q^{[1]}(f), \ldots, \Gamma_q^{[K-1]}(f)\right]^\trasp
\in \Cset^K$ and
\begin{equation}
\eh \eqdef \frac{e^{j\pi\frac{h}{K}}}{K}
\, \sinc\left(\frac{h}{K}\right)
\left[1,\,e^{j\frac{2\pi}{K}h},
\dots,e^{j\frac{2\pi}{K}h(K-1)}\right]^\trasp
\in \Cset^{K} \: .
\end{equation}
Then, the $h$-th harmonic coefficient 
can be compactly written as
$b_q^{[h]}(f)=\eh^\herm\Gammab_q(f)$,
and \eqref{eq:fs} becomes
\begin{equation}
\Gamma_q(f,t)=\sum_{h\in\Zset} \eh^\herm \,
\Gammab_q(f) \, e^{j2\pi\frac{h}{T}t}.
\label{eq:fs-compact}
\end{equation}
Notice that $\Gammab_q(f)$ does not depend on $h$, since it
collects the frequency responses associated with the $K$
control states. The dependence on the harmonic index is
instead captured by $\eh$, whose inner product with
$\Gammab_q(f)$ yields the $h$-th Fourier coefficient.

Equation~\eqref{eq:fs-compact} constitutes 
the core of the proposed model.
The harmonic index $h$ identifies
the spectral components generated 
by the periodic RIS reconfiguration,
whereas the dependence of $\Gammab_q(f)$ on $f$
captures the dispersive behavior of the 
reflective element.
Moreover, the proposed formulation naturally encompasses  
existing models as special cases.
Specifically, if the slot-dependent 
reflection coefficients are assumed to be frequency flat, i.e., 
$\Gamma_q^{[k]}(f) = \Gamma_q^{[k]}$ for every $k$,
the model reduces to the conventional  frequency-flat time-modulated 
RIS representation~\cite{Zhang2018}.
Conversely, in the absence of temporal modulation, 
all control states share the same response
$\Gamma_q^{[k]}(f) = \bar\Gamma_q(f)$,
and all non-zero harmonics vanish, so that the model
reduces to  a frequency-selective
LTI one~\cite{Katsanos2022,Abbas2025}.


\subsection{Equivalent Circuit and One-Pole Model}
\label{sec:one-pole}

To relate the steady-state frequency 
response $\Gamma_q^{[k]}(f)$
to a physically realizable RIS element,
we resort to the canonical 
equivalent-circuit model proposed in 
\cite{Costa2021}.\footnote{
This loaded-surface description belongs to the family
of physically consistent EM models of RISs surveyed in \cite{Bidabadi2025};
it corresponds to the isolated-cell limit of multiport network representations,
in which mutual coupling among elements is neglected and each cell
is characterized by its own tunable load impedance.}
Specifically, letting 
$\omega \eqdef 2\pi f$,
a reflective element 
printed on a grounded dielectric substrate
can be described by the following 
loaded-surface impedance model:
\begin{equation}
Z_{\text{in}}^{\text{pol}}(\omega,\theta)
= Z_{d}^{\text{pol}}(\omega,\theta) 
\parallel 
Z_{\text{patch}}^{\text{pol}}(\omega,\theta) 
\parallel  
Z_{\text{var}}(\omega),
\label{eq:Zin}
\end{equation}
where $\theta$ denotes 
the incidence angle of the impinging
EM wave, 
the superscript 
$\text{pol}\in\{\text{TE},\text{TM}\}$
identifies its (transverse-electric and transverse-magnetic, respectively) polarization,
and $A\|B=AB/(A+B)$ denotes the parallel operator.
The three impedances in \eqref{eq:Zin}  
represent the grounded dielectric substrate, 
the metallic patch, 
and the tunable load 
(e.g., a varactor diode), respectively. 
Their detailed expressions are reported in 
Appendix \ref{sec:surf-response}.

The corresponding reflection coefficient is obtained  
using the polarization-dependent free-space wave impedance 
$Z_0^{\text{pol}}(\theta)$ as
\begin{equation}
\Gamma^{\text{pol}}(\omega,\theta)=
\frac{Z_{\text{in}}^{\text{pol}}(\omega,\theta)-Z_0^{\text{pol}}(\theta)}
{Z_{\text{in}}^{\text{pol}}(\omega,\theta)+Z_0^{\text{pol}}(\theta)},
\quad \text{pol} \in \{\text{TE},\text{TM}\},
\label{eq:Gmap}
\end{equation}
where $Z_0^{\text{TE}}(\theta) \eqdef\ \frac{\zeta_0}{\cos\theta}$
and $Z_0^{\text{TM}}(\theta) \eqdef \zeta_0\cos\theta$ \cite{Balanis2016},
with $\zeta_0 \approx 377 \, \Omega$ denoting 
the free-space intrinsic impedance.
%

Equation~\eqref{eq:Gmap} returns the steady-state reflection coefficient
of the cell for a given setting of the tunable load: it coincides with
$\Gamma_q^{[k]}(f)$ in \eqref{eq:piecewise}, evaluated at $\omega = 2\pi f$,
when the load is set to the $k$-th control state.
The kernel $\Gamma_q(f,t)$ in \eqref{eq:tvfilter} is in turn
assembled from these steady-state responses through \eqref{eq:piecewise}.

A well-designed RIS element 
behaves as a resonant circuit
whose resonance frequency is close 
to the operating frequency. 
Consequently, in the vicinity of $\fc$, 
the equivalent-circuit model can be reduced to a 
{single-resonance} 
representation 
under the following
assumptions \cite{Munk2000,Abeywickrama2020}:
\assumption[ass:norm]{normal incidence and symmetric cell};
\assumption[ass:thin-sub]{electrically thin substrate};
\assumption[ass:cap-load]{capacitively loaded sheet};
\assumption[ass:loss]{lumped losses};
\assumption[ass:narrow]{narrowband signals}.

Under these assumptions, 
the reflection coefficient of a 
single resonant element takes
the familiar Lorentzian pole--zero form
derived in Appendix \ref{sec:surf-response}:
\begin{equation}
\begin{split}
\Gamma_q^{[k]}(f) &
=-\,\frac{j2\pi\left(f-f_{0,q}^{[k]}\right)+\xi_{\text{i},q}^{[k]}
-\xi_{\text{r},q}^{[k]}}%
{j2\pi\left(f-f_{0,q}^{[k]}\right)+\xi_{\text{i},q}^{[k]}
+\xi_{\text{r},q}^{[k]}} \\
& =-1+\frac{2\xi_{\text{r},q}^{[k]}}%
{j2\pi\left(f-f_{0,q}^{[k]}\right)
+\xi_{q}^{[k]}} \: .
\end{split}
\label{eq:polezero}
\end{equation}
Here, with reference to the $q$-th RIS element and the 
$k$-th modulation slot, 
$f_{0,q}^{[k]}$ denotes the resonance frequency
and $\xi_{q}^{[k]} \eqdef \xi_{\text{r},q}^{[k]}
+ \xi_{\text{i},q}^{[k]}$ is the total decay rate, 
with $\xi_{\text{r},q}^{[k]}$ and $\xi_{\text{i},q}^{[k]}$ 
denoting the radiative (external) and intrinsic (internal) decay 
rates \cite{Fan2003}, respectively.
These parameters depend on the control state, 
selected e.g. through the bias voltage 
applied to the varactor diode.

For notational simplicity, \eqref{eq:polezero} refers to a fixed
polarization, and the corresponding polarization index is
suppressed. 
Under normal incidence ($\theta = 0$) 
and for a symmetric 
unit cell, the TE and TM responses coincide. 
For oblique incidence ($\theta \neq 0$)
the parameters
$f_{0,q}^{[k]}$, $\xi_{\mathrm{r},q}^{[k]}$, and
$\xi_{\mathrm{i},q}^{[k]}$ become dependent on the incidence
angle and polarization and can be obtained by re-evaluating
\eqref{eq:Zin}--\eqref{eq:Gmap} 
without altering
either the single-resonance structure in~\eqref{eq:polezero} or
the proposed LPTV formulation. All full-wave results reported hereafter are at normal incidence; the oblique-incidence characterization is left to future work.

For a physically realizable RIS element with a real-valued
impulse response, the steady-state response $\Gamma_q^{[k]}(f)$ 
associated with
each control state must satisfy
Hermitian symmetry.
Since the switching functions 
in~\eqref{eq:piecewise} are real-valued, this implies the Hermitian symmetry
of $\Gamma_q(f,t)$. 
Equivalently, the
harmonic transfer functions must satisfy
$ b_q^{[h]}(-f)
=
\left[b_q^{[-h]}(f)\right]^*$.
The single-pole approximation in 
\eqref{eq:polezero}, however, retains
only the resonance centered at 
the positive frequency $f_{0,q}^{[k]}$ 
and therefore does not satisfy the
Hermitian symmetry.
Such a symmetry is restored by including the mirror 
resonance at ($-f_{0,q}^{[k]}$),
leading to the Hermitian two-pole response 
reported in Appendix~\ref{sec:surf-response} [see \eqref{eq:Hcomplete}].
Nevertheless, over the positive-frequency signal band 
$f = \fc + \nu$, with
$0 \leq \nu \leq B$,
the contribution of the negative-frequency
resonance is nearly constant
and can be absorbed into the background reflection term. 
Consequently, the single-pole expression in 
\eqref{eq:polezero} approximates 
the physical (Hermitian) cell response
with relative error 
$\mathcal{O}(1/Q_\text{L})$, 
where $Q_{\text{L}}$ is the loaded quality factor [see \eqref{eq:Q}],
as shown in
Appendix \ref{sec:surf-response}.
Accordingly, all subsequent developments 
evaluate $\Gamma_q^{[k]}(f)$
only at positive in-band frequencies,
where the single-pole approximation is accurate
to this order.


\subsection{Impulse Response and Finite Memory}
\label{sec:memory}

Let $\gamma^{[k]}_q(\tau) \eqdef
\int_{-\infty}^{+\infty} \Gamma_q^{[k]}(\nu+\fc) \,
e^{j2\pi\nu \tau} \, d\nu$ denote the equivalent
baseband steady-state 
impulse response associated 
with the $k$-th modulation slot
of the $q$-th RIS element.
By straightforward inverse Fourier transformation 
of \eqref{eq:polezero} one gets
\begin{equation}
{\gamma}_q^{[k]}(\tau) =
-\delta(\tau) + 2\,\xi_{\text{r},q}^{[k]} \, e^{-\xi_q^{[k]}\tau}
\, e^{j2\pi\Delta_q^{[k]}\tau} \text{u}(\tau),
\label{eq:impulse-response}
\end{equation}
where $\text{u}(\cdot)$ is the unit step function and 
$\Delta_q^{[k]} \eqdef f_{0,q}^{[k]} - \fc$ denotes
the operating detuning.

The exponential decay in \eqref{eq:impulse-response}
suggests defining an effective 
memory duration for each RIS element,
such that
${\gamma}^{[k]}_q(t) \approx 0$, $\forall\,t\not\in[0, T_{\gamma,q})$
and $\forall k \in \{0,1,\ldots, K-1\}$.
For a prescribed 
amplitude threshold 
$0<\chi<1$, 
the exponential term of the 
impulse response falls below 
a fraction $\chi$ of its initial value 
after a time
\begin{equation}
T_{\gamma,q} \eqdef \max_{0 \le k < K} 
\frac{\ln(1/\chi)}{\xi_q^{[k]}} \: .
\label{eq:Tgamma-q}
\end{equation}
The effective memory in 
\eqref{eq:Tgamma-q} is governed solely by 
the total decay rate $\xi_q^{[k]}$ and
is therefore independent of the 
operating detuning $\Delta_q^{[k]}$.
In the remainder of the paper, each element response is
implicitly truncated to $[0,T_{\gamma,q})$. The neglected
resonant tail has an amplitude no greater than a fraction
$\chi$ of its initial value; for example, $\chi=10^{-3}$
corresponds to a $-60$-dB amplitude threshold.

To impose a common settling condition across the entire RIS,
we define its effective memory as
\begin{equation}
\Tgamma
\eqdef
\max_{1\leq q\leq Q}T_{\gamma,q}
=
\max_{\substack{1\leq q\leq Q\\0\leq k<K}}
\frac{\ln(1/\chi)}{\xi_q^{[k]}} .
\label{eq:Tgamma}
\end{equation}
This worst-case definition does not require the RIS elements
to have identical responses. Rather, it ensures that the
slowest-decaying element, over all control states, is included
in both the settling-time and CP requirements.

\begin{remark}[Steady-state condition, memory, and adiabaticity]
\label{rem:steady-state}
The effective memory $\Tgamma$ 
provides a direct physical interpretation
of the validity condition 
for the piecewise-constant
approximation in~\eqref{eq:piecewise}. In particular, the
transient generated by each switching event must essentially
decay before the subsequent state transition. A sufficient
condition is therefore
\begin{equation}
\Tr\gg\Tgamma .
\label{eq:settle}
\end{equation}
Since
$\Tgamma = \mathcal{O}\left( Q_{\text{L}} \, \fc^{-1} \right)$
and $Q_{\text{L}} \gg 1$,
it follows that $\Tgamma \gg \fc^{-1}$.
Consequently, \eqref{eq:settle} implies the classical adiabatic
condition for time-modulated metasurfaces~\cite{Zhang2018}:
\begin{equation}
\Tr^{-1} \ll \fc .
\label{eq:adiab}
\end{equation}
\end{remark}

\noindent 
Remark~\ref{rem:steady-state} shows that
the carrier-rate condition~\eqref{eq:adiab} is not an
independent assumption, but a consequence of the more
fundamental requirement in \eqref{eq:settle} 
that the resonator reaches its steady state 
before the next switching event.

The effective memory $\Tgamma$
quantifies the temporal spreading introduced by the RIS,
playing a role analogous to the delay spread of a
wireless propagation channel.
As shown in the next section,
this additional memory contributes to the overall
delay spread experienced by the OFDM waveform and,
consequently, enters the CP design condition.

\section{OFDM Signal Analysis}
\label{sec:ofdm-analysis}

Consider a wideband  OFDM waveform with $M$ active subcarriers,
subcarrier spacing $\Delta f$,  useful symbol duration $\Tu = 1/\Delta f$, and
CP duration $\Tg$.
The signal is  transmitted by a source
located in the far field of the RIS. 
Its complex baseband 
representation is
\begin{equation}
s(t)=\sum_{i\in\Zset}\sum_{m=0}^{M-1}s_m[i]\,
e^{j2\pi\frac{m}{\Tu}t}\,\psi(t-i\Ts),
\label{eq:tx}
\end{equation}
where $\psi(t) \eqdef \Ts^{-1/2} \,
\rect\!\left(\frac{t-\Ts/2}{\Ts}\right)$ and  
$\Ts\eqdef\Tu+\Tg$ denotes the overall 
OFDM symbol duration.
As commonly assumed in RIS-assisted communication systems, 
the direct source--destination link is blocked 
or severely attenuated by
obstacles, so that communication takes place exclusively through
the two-hop RIS-assisted link \cite{Wu2020,DiRenzo2020}.
Moreover, the source, the RIS, the destination,
and the surrounding scatterers are assumed to be static,
or equivalently, the coherence time of the propagation channels 
is assumed to largely exceed the observation interval.
Hence, the source--RIS and RIS--destination links are
LTI, with constant path delays and gains,
and no mobility-induced Doppler shift is present.
The only source of time variation is the intentional periodic reconfiguration of the RIS, which makes the
end-to-end channel LPTV rather than generically time-varying.

Owing to the wideband nature of the transmitted 
OFDM signal, the 
channel between the source and the $q$-th
RIS element is modeled as a
tapped-delay line comprising $L_{\text{s},q}$ 
propagation paths.
Accordingly, the complex baseband signal impinging
on the $q$-th RIS element is 
\begin{equation}
x_q(t)=\sum_{\ell=1}^{L_{\text{s},q}}
h_{\text{s},\ell,q}\, s(t-\tau_{\text{s},\ell,q}),
\label{eq:firsthop}
\end{equation}
where $h_{\text{s},\ell,q}$ and $\tau_{\text{s},\ell,q}$ 
denote, respectively, the complex gain and propagation 
delay of the $\ell$-th path 
from the source to the $q$-th RIS element.
The complex baseband  signal received at the destination,
which is also assumed to lie 
in the far field of the RIS, is given as the
superposition of the $Q$ individual 
element outputs modified by the
RIS--destination channel:
\begin{equation}
r(t) = \sum_{q=1}^Q \sum_{j=1}^{L_{\text{d},q}}
h_{\text{d},j,q} \, y_q(t-\tau_{\text{d},j,q}) + v(t) \: ,
\label{eq:rx}
\end{equation}
where  $h_{\text{d},j,q}$ and $\tau_{\text{d},j,q}$ 
denote, respectively, the complex gain
and propagation delay of the 
$j$-th path 
from the $q$-th RIS element to the 
destination, and $v(t)$ denotes 
complex additive white Gaussian noise (AWGN) 
at the destination.
Moreover, $y_q(t)$ is related to $x_q(t)$ through 
the LPTV model in~\eqref{eq:bb}.

After CP removal, the receiver  demodulates each  OFDM subcarrier.
For analytical convenience, the receiver timing reference
is aligned with the earliest two-hop propagation path.
Thus, during the $n$-th symbol period, the 
demodulated sample on subcarrier $\bar{h}$  
is%
\footnote{For notational convenience, 
OFDM receiver processing
is represented in continuous time,
although practical receivers perform subcarrier 
demodulation digitally using fast Fourier transform (FFT) algorithms.} 
\begin{equation}
y_{\bar h}[n]=\frac{1}{\sqrt{\Ts}}
\int_{n\Ts+\Tg+\tau_{\text{min}}}^{(n+1)\Ts+\tau_{\text{min}}}
r(t)\,e^{-j2\pi\frac{\bar h}{\Tu}t}\,dt,
\label{eq:demod}
\end{equation}
where $\tau_{\text{min}} \eqdef \min_{q,\ell,j} ( \tau_{s,\ell,q} + \tau_{d,j,q} )$.
Substituting \eqref{eq:rx} and \eqref{eq:bb} 
into \eqref{eq:demod},
and using the Fourier transform of \eqref{eq:firsthop},
one obtains the exact input--output relation 
reported in \eqref{eq:twocol} 
at the top of the following page.
\begin{figure*}[!t]
\normalsize
\setcounter{mytempeqncnt}{\value{equation}}
\setcounter{equation}{19}
\begin{multline}
\label{eq:twocol}
y_{\bar{h}}[n] = \Tu \sum_{q=1}^Q \sum_{j=1}^{L_{\text{d},q}}
\sum_{\ell=1}^{L_{\text{s},q}} h_{\text{d},j,q} h_{\text{s},\ell,q}
\sum_{i\in\Zset} \sum_{m=0}^{M-1} s_m[i] \,
e^{j 2 \pi \frac{m}{\Tu} \left(i \Ts + \frac{\Ts}{2}\right)}
\sum_{h \in \Zset}
e^{-j2\pi\left(\frac{\bar{h}}{\Tu}-\frac{h}{T}\right)
\left(n\Ts+\Tg+\frac{\Tu}{2}+\tau_{\text{min}}\right)} \,
e^{-j2\pi \frac{h}{T} \tau_{\text{d},j,q}} \, \eh^\herm \\
\times \int_{-\infty}^{+\infty} \Gammab_q(\nu+\fc) \,
\sinc\left[\Ts\left(\nu-\frac{m}{\Tu}\right)\right]
\, \sinc\left[\Tu\left(\frac{\bar{h}}{\Tu}-\frac{h}{T}-\nu\right)\right]
e^{j2\pi\nu\left[(n-i)\Ts+\frac{\Tg}{2}-(\tau_{\text{s},\ell,q}+\tau_{\text{d},j,q}-\tau_{\text{min}})\right]}
d\nu \, + v_{\bar{h}}[n].
\end{multline}
\setcounter{equation}{\value{mytempeqncnt}}
\hrulefill
\vspace*{4pt}
\end{figure*}
\addtocounter{equation}{1}
Albeit exact, \eqref{eq:twocol} provides limited 
physical insight. 
The following Theorem establishes an equivalent and more
tractable per-subcarrier input--output relation.

\begin{theorem}
\label{thm:guard}
Define 
the two-hop propagation delay spread 
as 
$\Delta\tau_{\text{max}} \eqdef \tau_{\text{max}}-\tau_{\text{min}}$,
where $\tau_{\text{max}} \eqdef \max_{\ell,j,q}  (\tau_{\text{s},\ell,q} + \tau_{\text{d},j,q} )$.
If the CP length obeys
\begin{equation}
\Tg \;\ge\; \Delta\tau_{\text{max}}+\Tgamma,
\label{eq:guard}
\end{equation}
and the RIS reconfiguration period is matched 
to the useful OFDM symbol 
duration, i.e., $T=\Tu$,
then the demodulated samples are free of 
ISI and out-of-grid spectral leakage. 
The sample on subcarrier $\bar{h}$ 
can be expressed as
\begin{multline}
y_{\bar{h}}[n] = \frac{\Tu}{\Ts}
\sum_{m = 0}^{M-1} \sum_{q=1}^Q
s_m[n] \, \eb_{\bar{h}-m}^\herm \,
\Gammab_q\left(\frac{m}{\Tu}+\fc\right) \\
\times \sum_{\ell=1}^{L_{\text{s},q}}
h_{\text{s},\ell,q}
e^{-j 2 \pi \frac{m}{\Tu} \tau_{\text{s},\ell,q}}
\sum_{j=1}^{L_{\text{d},q}}
h_{\text{d},j,q} \,
e^{-j 2 \pi \frac{\bar h}{\Tu} \tau_{\text{d},j,q}}
+ v_{\bar{h}}[n].
\label{eq:io}
\end{multline}
\end{theorem}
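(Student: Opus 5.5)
Rather than manipulating the exact relation \eqref{eq:twocol} directly, we will work in the time domain and return to the frequency domain only at the end. Using \eqref{eq:bb} together with \eqref{eq:fs-compact}, the output of the $q$-th element is written as
\[
y_q(t)=\sum_{h\in\Zset}e^{j2\pi\frac{h}{T}t}\,\eh^\herm\!\left(\boldsymbol{\gamma}_q\star x_q\right)(t),
\]
where $\boldsymbol{\gamma}_q(\tau)\eqdef[\gamma_q^{[0]}(\tau),\dots,\gamma_q^{[K-1]}(\tau)]^\trasp$ collects the baseband impulse responses of Section~\ref{sec:memory}, truncated to $[0,\Tgamma)$. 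Each branch $h$ is therefore an LTI filtering with finite memory $\Tgamma$, followed by multiplication by the complex exponential $e^{j2\pi\frac{h}{T}t}$.

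\textbf{Step 1: ISI and cyclic convolution.} Substituting \eqref{eq:firsthop} shows that, on each branch, $s(t)$ undergoes a composite LTI filtering by the cascade of the delay $\tau_{\text{s},\ell,q}$, the response $\boldsymbol{\gamma}_q$, and the delay $\tau_{\text{d},j,q}$. The modulation $e^{j2\pi\frac{h}{T}(t-\tau_{\text{d},j,q})}$ is applied after the element, so the RIS--destination delay contributes the phase $e^{-j2\pi\frac{h}{T}\tau_{\text{d},j,q}}$ already visible in \eqref{eq:twocol}. Relative to $\tau_{\text{min}}$, the support of the composite filter lies in $[0,\Delta\tau_{\text{max}}+\Tgamma)$. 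Under \eqref{eq:guard}, for every $t$ in the integration window of \eqref{eq:demod}, the filtered signal therefore depends only on the $n$-th symbol, and on that window it equals the filter applied to the periodic extension $\sum_m s_m[n]e^{j2\pi\frac{m}{\Tu}t}$. Because complex exponentials are eigenfunctions of LTI filters, each tone $m$ emerges as
\[
s_m[n]\,\Gammab_q\!\left(\tfrac{m}{\Tu}+\fc\right)e^{-j2\pi\frac{m}{\Tu}(\tau_{\text{s},\ell,q}+\tau_{\text{d},j,q})}\,e^{j2\pi\frac{m}{\Tu}t},
\]
up to the factor $\Ts^{-1/2}$ and the gains $h_{\text{s},\ell,q}h_{\text{d},j,q}$. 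Here the transfer function is evaluated at the absolute frequency $m/\Tu+\fc$, which is the key point.

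\textbf{Step 2: demodulation and grid alignment.} Multiplying by $e^{j2\pi\frac{h}{T}t}$ and setting $T=\Tu$ shifts tone $m$ to the on-grid frequency $(m+h)/\Tu$. Integrating over a window of length $\Tu$ against $e^{-j2\pi\frac{\bar h}{\Tu}t}$ then yields $\Tu\,\delta_{\bar h,m+h}$, since all frequencies involved are integer multiples of $1/\Tu$. This orthogonality is exactly what suppresses the out-of-grid leakage; if $T\neq\Tu$, sinc-type leakage terms would survive. The Kronecker delta collapses the sum over $h$ to $h=\bar h-m$, producing $\eb_{\bar h-m}^\herm$.

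\textbf{Step 3: phase bookkeeping.} The delay phases combine as
\[
e^{-j2\pi\frac{m}{\Tu}(\tau_{\text{s}}+\tau_{\text{d}})}\,e^{-j2\pi\frac{\bar h-m}{\Tu}\tau_{\text{d}}}=e^{-j2\pi\frac{m}{\Tu}\tau_{\text{s}}}\,e^{-j2\pi\frac{\bar h}{\Tu}\tau_{\text{d}}},
\]
which is the factorization appearing in \eqref{eq:io}. The residual phase $e^{j2\pi\frac{\bar h}{\Tu}(n\Ts+\Tg+\tau_{\text{min}})}$, coming from the window start, must be shown to cancel or be absorbed by the timing convention. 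Using $\Ts=\Tu+\Tg$, the term $e^{j2\pi\bar h n\Ts/\Tu}$ reduces to $e^{j2\pi\bar h n\Tg/\Tu}$, which does not obviously vanish. We expect this to be absorbed by defining symbol timing relative to each symbol's own start, as is standard when the CP is removed, and we would state this convention explicitly. Collecting the normalizations $\Ts^{-1/2}\cdot\Ts^{-1/2}\cdot\Tu$ gives the prefactor $\Tu/\Ts$. The noise term $v_{\bar h}[n]$ is carried through unchanged.

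\textbf{Main obstacle.} We expect the hardest part to be making Step 1 rigorous in the presence of the LPTV modulation. The product $e^{j2\pi\frac{h}{T}t}$ is not time-invariant, so one must check carefully that the memory $\Tgamma$ acts on the pre-modulation LTI part only and that the modulation merely relabels frequencies. One must also check that the truncation and single-pole approximations, taken with the equality in \eqref{eq:guard}, give exact rather than approximate cancellation of the ISI terms. Equivalently, one could verify the result from \eqref{eq:twocol} by showing that, for $i\neq n$, the $\nu$-integral vanishes by the support argument, and that for $i=n$ it evaluates to $\Gammab_q(m/\Tu+\fc)$ times a delay phase. We would use this as a consistency check on the time-domain derivation.
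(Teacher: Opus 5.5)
Your time-domain argument is sound through Step~2. Step~3, however, contains a real error. The ``residual phase'' $e^{j2\pi\frac{\bar h}{\Tu}(n\Ts+\Tg+\tau_{\text{min}})}$ does not exist, and the convention change you propose to absorb it would prove a different statement.

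In \eqref{eq:tx}, \eqref{eq:fs} and \eqref{eq:demod}, three exponentials are all referenced to the same absolute time $t$: the subcarrier tones $e^{j2\pi\frac{m}{\Tu}t}$, the RIS harmonics $e^{j2\pi\frac{h}{T}t}$, and the demodulation kernel $e^{-j2\pi\frac{\bar h}{\Tu}t}$. This is also what makes the guard interval in \eqref{eq:tx} a true cyclic prefix, since the tones are $\Tu$-periodic in $t$. After Step~1 (with $T=\Tu$), the $(q,\ell,j,m,h)$ contribution to $\Ts^{-1/2}r(t)e^{-j2\pi\frac{\bar h}{\Tu}t}$ is $c\,e^{j2\pi\frac{m+h-\bar h}{\Tu}t}$. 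Here $c=\Ts^{-1}h_{\text{s},\ell,q}h_{\text{d},j,q}\,s_m[n]\,\eh^{\herm}\Gammab_q(\tfrac{m}{\Tu}+\fc)\,e^{-j2\pi\frac{m}{\Tu}(\tau_{\text{s},\ell,q}+\tau_{\text{d},j,q})}\,e^{-j2\pi\frac{h}{\Tu}\tau_{\text{d},j,q}}$ does not depend on $t$. Over any interval of length $\Tu$, the integral is zero for $h\neq\bar h-m$ and exactly $\Tu c$ for $h=\bar h-m$, whatever the window start.

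That is just the Kronecker delta you already derived in Step~2. Nothing is left to cancel, and the phase factorization you wrote in Step~3 is already the final result. Conversely, if you re-referenced the demodulator to each window start, $y_{\bar h}[n]$ would acquire precisely the factor $e^{j2\pi\frac{\bar h}{\Tu}(n\Ts+\Tg+\tau_{\text{min}})}$ and would no longer satisfy \eqref{eq:io}. Drop that convention change.

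The worries in your ``main obstacle'' paragraph are already settled by your own decomposition. On branch $h$ the element is the LTI filter $\eh^{\herm}\gammab_q$ followed by a memoryless multiplication, so the delay $\tau_{\text{d},j,q}$ only produces the constant $e^{-j2\pi\frac{h}{T}\tau_{\text{d},j,q}}$. The truncation of $\gammab_q$ to $[0,\Tgamma)$ is part of the model, so the ISI cancellation is exact; equality in \eqref{eq:guard} affects only endpoints. The single-pole form plays no role, since only the finite support of the element response enters.

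With that repair, your proof follows a genuinely different route from the paper's. The paper never returns to $r(t)$. It starts from the exact relation \eqref{eq:twocol} and reads the $\nu$-integral as $(\gammab_q*g)(t_0)$, where:
\begin{itemize}
\item $g$ is the convolution of the $\Ts$-long symbol window modulated at $m/\Tu$ with the $\Tu$-long DFT window modulated at $\bar h/\Tu-h/T$;
\item $t_0=(n-i)\Ts+\Tg/2-(\tau_{\text{s},\ell,q}+\tau_{\text{d},j,q}-\tau_{\text{min}})$.
\end{itemize}
Since $g$ vanishes outside $[-\Tu-\Tg/2,\Tu+\Tg/2]$ and equals $\Ts^{-1}\sinc[\Tu(\bar h/\Tu-h/T-m/\Tu)]\,e^{j2\pi\frac{m}{\Tu}t}$ for $|t|\le\Tg/2$, the paper requires $t_0-\tau\in[-\Tg/2,\Tg/2]$ for $i=n$ and all $\tau\in[0,\Tgamma)$. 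This yields \eqref{eq:guard}, removes the $i\neq n$ terms, and extracts $\Gammab_q(m/\Tu+\fc)$; $T=\Tu$ then turns the sinc into a Kronecker delta.

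Your version is the classical CP argument applied branch by branch to the harmonic decomposition: support of the composite filter, the eigenfunction property on the periodic extension, and orthogonality over a $\Tu$-long window. It is self-contained, because it does not rely on \eqref{eq:twocol}, which the paper states without derivation, and it makes the delay-phase bookkeeping transparent. The paper's route keeps the sinc factor explicit, so the off-grid leakage for $T\neq\Tu$ and the ISI mechanism when \eqref{eq:guard} is violated stay visible in a single formula.
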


\begin{proof}
See Appendix~\ref{sec:app1}
\end{proof}


\subsection{Discussion and Special Cases}

One important consequence of~\eqref{eq:io} 
in Theorem~\ref{thm:guard}
is that each received
subcarrier $\bar h$ generally contains contributions from all transmitted
subcarriers. 
This differs fundamentally from conventional OFDM
transmission over an LTI channel, for which a sufficiently long
CP yields a diagonal per-subcarrier input--output
relation, that is, no ICI.
In the present case, instead,  the periodic RIS reconfiguration translates
each transmitted subcarrier onto other frequencies of the OFDM
grid. Consequently, CP removal alone is not sufficient to
diagonalize the input--output relation, even when the CP
satisfies the condition in Theorem~\ref{thm:guard} and completely
suppresses ISI.
The remaining off-diagonal terms 
constitute modulation-induced
ICI, an
intrinsic consequence of the periodic
reconfiguration of the RIS, independent of CP duration and multipath. 

Unlike generic ICI caused, e.g., by
synchronization errors or Doppler spreading, 
the intercarrier coupling in~\eqref{eq:io} is deterministic 
and highly structured, as discussed in the following Remark.

\begin{remark}[Structured intercarrier coupling]
\label{rem:coupling}
Under the synchronization condition $T = \Tu$,
the transmitted subcarrier $m$ 
contributes to the
received subcarrier $\bar{h}=m+h$ through
the $h$-th harmonic of the LPTV RIS response.
The corresponding coupling
coefficient is [see \eqref{eq:io}]
\begin{equation}
b_q^{[\bar h-m]}
\left(\frac{m}{\Tu}+\fc\right)
=
\mathbf e_{\bar h-m}^{\herm}
\Gammab_q\left(\frac{m}{\Tu}+\fc\right).
\label{eq:coupling}
\end{equation}
\end{remark}

\noindent
Hence, the RIS generates no out-of-grid spectral components:
all frequency translations occur by integer multiples of the
subcarrier spacing. Energy may nevertheless be transferred to
inactive or out-of-band OFDM-grid frequencies, depending on
the harmonic content of the RIS modulation.

It is instructive to relate Remark~\ref{rem:coupling} to the practice,
recommended in \cite{Bjornson2022} in the context of channel estimation,
of confining RIS state transitions to the silent guard intervals between OFDM blocks,
on the grounds that switching is a non-linear operation
that may modulate the reflected signal into adjacent bands.
Theorem~\ref{thm:guard} shows that such a precaution
is not required when the reconfiguration is periodic
and synchronized with the useful symbol duration,
i.e., $T = \Tu$, and the guard interval satisfies \eqref{eq:guard}.
Under these conditions, all frequency translations
produced by the surface fall on the OFDM grid
and no out-of-grid leakage is generated,
within the quasi-static regime of Remark~\ref{rem:quasi-static},
even though switching occurs
while the useful part of the symbol
is impinging on the surface.
Out-of-grid components arise only when the reconfiguration period
is incommensurate with $\Tu$, when the CP is too short
to absorb the transient of the element, or when the switching transient
itself is not negligible.

From the perspective of a conventional OFDM
receiver, the harmonic coupling \eqref{eq:coupling} appears as ICI.
However, since its structure is mainly determined by the RIS control sequence,
it can in principle be mitigated or
exploited. Possible approaches include joint equalization across
coupled subcarriers, transmitter precoding, allocation of
virtual or guard subcarriers, and design of RIS control
sequences that suppress selected harmonics or confine them to
prescribed frequency bins. 
Alternatively, harmonic coupling can
be deliberately exploited for index modulation or waveform
shaping. The development of such transceiver and RIS designs is
beyond the scope of this work.

\noindent

Besides harmonic coupling, \eqref{eq:io} and \eqref{eq:coupling}
reveal the effect of the frequency-selective RIS response. 
Specifically, the coefficient associated with the coupling from transmitted
subcarrier $m$ to received subcarrier $\bar{h}$ is evaluated at
the absolute input frequency $m/\Tu+\fc$. 
Therefore, different
transmitted subcarriers generally experience different
reflection coefficients, even when they are translated by the
same harmonic order. Harmonic mixing and frequency selectivity
are thus jointly embodied in $b_q^{[\bar h-m]}(m/\Tu+\fc)$ 
and cannot, in general, be
modeled as independent effects.

\begin{remark}[Presence of a direct link]
The presence of a direct source--destination static link 
does not alter the harmonic-coupling 
structure of \eqref{eq:io}. 
Provided that the CP also 
accommodates the delay spread of the
direct link relative to the selected receiver timing reference,
the latter contributes only a conventional diagonal LTI term,
corresponding to $\bar{h}=m$. The off-diagonal terms
$\bar{h}\neq m$ remain entirely determined by the LPTV 
RIS response.
\end{remark}

\noindent
The proposed formulation encompasses conventional
frequency-selective and time-modulated RIS models as special
cases.

\begin{remark}[Frequency-selective time-invariant RIS]
\label{rem:lti-ris}
If the RIS configuration is time invariant (i.e., static), 
only the zeroth harmonic is present in \eqref{eq:fs}, 
namely,
$b_q^{[0]}(f)=\bar\Gamma_q(f)$ and
$b_q^{[h]}(f)=0$, $\forall h\neq0$.
The harmonic coupling then disappears (no ICI), and the OFDM
subcarriers remain orthogonal. In this case, \eqref{eq:io}
reduces to
\begin{multline}
y_m[n]
=
\frac{\Tu}{\Ts}
s_m[n]
\sum_{q=1}^{Q}
\bar\Gamma_q\left(\frac{m}{\Tu}+\fc\right)
\\
{}\times
\sum_{\ell=1}^{L_{\mathrm{s},q}}
\sum_{j=1}^{L_{\mathrm{d},q}}
h_{\mathrm{s},\ell,q}
h_{\mathrm{d},j,q}
e^{-j2\pi\frac{m}{\Tu}
\left(\tau_{\mathrm{s},\ell,q}
+\tau_{\mathrm{d},j,q}\right)}
+
v_m[n].
\label{eq:io-lti}
\end{multline}
\end{remark}

\noindent
Model \eqref{eq:io-lti} is consistent 
with conventional frequency-selective
time-invariant RIS models~\cite{Cai2020,Katsanos2022,Abbas2025,Li2025}.
The RIS memory effects remain present in this case 
and must still be
included in the CP condition \eqref{eq:guard}.


\begin{remark}[Frequency-flat time-modulated RIS]
\label{rem:flat-lptv-ris}
If the response associated with each RIS control state is
frequency flat, then
$\Gamma_q^{[k]}(f)=\Gamma_q^{[k]}$
and consequently
$b_q^{[h]}(f)=b_q^{[h]}$ in \eqref{eq:fs}.
The frequency-dependent weighting in \eqref{eq:io} disappears, 
whereas the
harmonic coupling between OFDM subcarriers remains. In
particular, the coefficient coupling subcarrier $m$ to
subcarrier $\bar h$ depends only on the harmonic order
$\bar h-m$, rather than on the absolute input frequency. 
\end{remark}

\noindent
The resulting relation coincides with the conventional
frequency-flat time-modulated RIS model adopted in
space--time-coding and index-modulation
architectures~\cite{Zhang2018,Xiao2024}.
Since a strictly frequency-flat response is instantaneous, the
RIS memory vanishes in this idealized case
($\Tgamma = 0$), and the CP needs to
accommodate only 
the two-hop propagation-channel delay spread.


\section{Numerical Results}
\label{sec:simulations}

\begin{figure}
\centering
\includegraphics[width=\columnwidth]{figures/openris_atom_rendering.png}
\caption{Rendered image of the considered simplified unit cell design.
A central square microstrip patch
is coupled to two parasitic patches arranged
along two orthogonal directions.
The gap between the main patch and each parasitic patch
is bridged by a varactor diode (one per linear polarization);
in the EM model, each diode is replaced by a lumped port.
The patches are printed on a grounded F4B substrate,
and the biasing vias and RF chokes of the physical cell
are omitted in this simplified model.}
\label{fig:atom-rendering}
\end{figure}

In this section, we validate the single-pole resonant model
\eqref{eq:polezero} against the full-wave EM response 
of a practical reconfigurable unit cell, and quantify the in-band dispersion
that motivates the LPTV framework of Section~\ref{sec:lptv-model}.
As a case study, we consider the varactor-tuned unit cell of the
open-source OpenRIS design~\cite{Rains2023}, which operates 
in the 5G NR
band n78 ($3.3$--$3.8$~GHz). 
The cell consists of a square microstrip patch 
with two parasitic patches, printed on a grounded F4B substrate
with relative permittivity $\epsilon_r = 2.2$, loss tangent $\tan\delta = 10^{-3}$, 
thickness $d = 2$~mm, and spatial periodicity $30$~mm. Reconfigurability
is provided by two orthogonally oriented
Skyworks SMV1408 varactor diodes~\cite{skyworks_smv1405}.

We adopt the simplified EM model illustrated in 
Fig.~\ref{fig:atom-rendering}.
Specifically, the biasing vias and the RF chokes are omitted,
and the varactor diodes are replaced by lumped ports,
consistently with the modeling choices in~\cite{Rains2023}.
The unit cell is characterized using the open-source
finite-element solver Palace~\cite{awspalace}
(commit \texttt{a0655fb}), built on the MFEM library~\cite{mfem}.

The computational domain comprises a single cell
with doubly-periodic Floquet boundary conditions
on the lateral walls and a Floquet port on the top face,
placed one free-space wavelength at $3.5$~GHz ($\approx 85.7$~mm)
above the patches.
Only the fundamental (specular) mode is retained%
\footnote{Since the period is smaller than the free-space wavelength
over the whole band, all higher-order Floquet modes are evanescent
and are attenuated by more than $140$~dB at the port plane.}.
The metallizations are modeled as zero-thickness perfectly electric conducting sheets,
and each varactor gap as a lumped port with $50$-$\Omega$
reference impedance.
The mesh is adaptively refined using an
a posteriori error estimator over eight iterations,
yielding approximately $1.5\times10^{7}$
unknowns with first-order N\'ed\'elec elements.

The $3$--$4$~GHz response is computed with the adaptive
fast frequency sweep of Palace (relative tolerance $10^{-3}$)
on a $10$-MHz grid, and interpolated onto a $1$-MHz grid
for the subsequent processing.
Reflection phases are referred to the plane
of the patch metallization.
The resulting multiport scattering matrix includes
both the specular Floquet ports and the lumped diode
ports. 
The latter are subsequently terminated
with the varactor equivalent circuit
through the Redheffer star product~\cite{Red1959}.

Each packaged diode is described by the manufacturer SPICE
model~\cite{skyworks_smv1405}: the series inductance
$L_\text{S}$ is connected to the junction branch, i.e.,
the resistance $R_\text{S}$ in series with the junction capacitance
$C_\text{J}(V_\text{R}) =
C_{\text{J}0}\,(1+V_\text{R}/V_\text{J})^{-\alpha}$,
which is shunted by the case capacitance $C_\text{P}$;
the junction diode is neglected, since it is reverse biased.
The model parameters are
$C_{\text{J}0} = 3.89$~pF, $V_\text{J} = 0.92$~V, $\alpha = 0.5$,
$R_\text{S} = 0.6~\Omega$, $C_\text{P} = 0.21$~pF,
and $L_\text{S} = 0.45$~nH for the SOD-882 package~\cite{skyworks_smv1405}.

This procedure allows each applied bias voltage
to be directly mapped onto the corresponding steady-state
control configuration introduced in Section~\ref{sec:lptv-model}.
Owing to the polarization-selective geometry of the unit cell,
the co-polarized reflection coefficient is only weakly affected
by the varactor oriented orthogonally to the incident electric
field. This behavior is an intended feature of the OpenRIS
design~\cite{Rains2023} and is also confirmed by our full-wave
analysis. All results reported below therefore refer to the
co-polarized response obtained when the same bias voltage is
applied to both varactor diodes.

We consider two steady-state control configurations,
denoted by $c_0$ and $c_1$,
corresponding to the reverse-bias voltages
$V_{c_0} = 4$~V and $V_{c_1} = 19.25$~V,
i.e., to $C_\text{J} \approx 1.68$~pF and $0.83$~pF,
respectively.
The two bias states and the channel placement specified below
are  jointly selected so that the cell 
operates as a binary ($1$-bit) phase shifter 
over the full $100$-MHz 5G NR channel.

 
\subsection{Single-Pole Fitting Over the Full Band}
\label{sec:fit}

\begin{figure}
\centering
\includegraphics[width=\columnwidth]{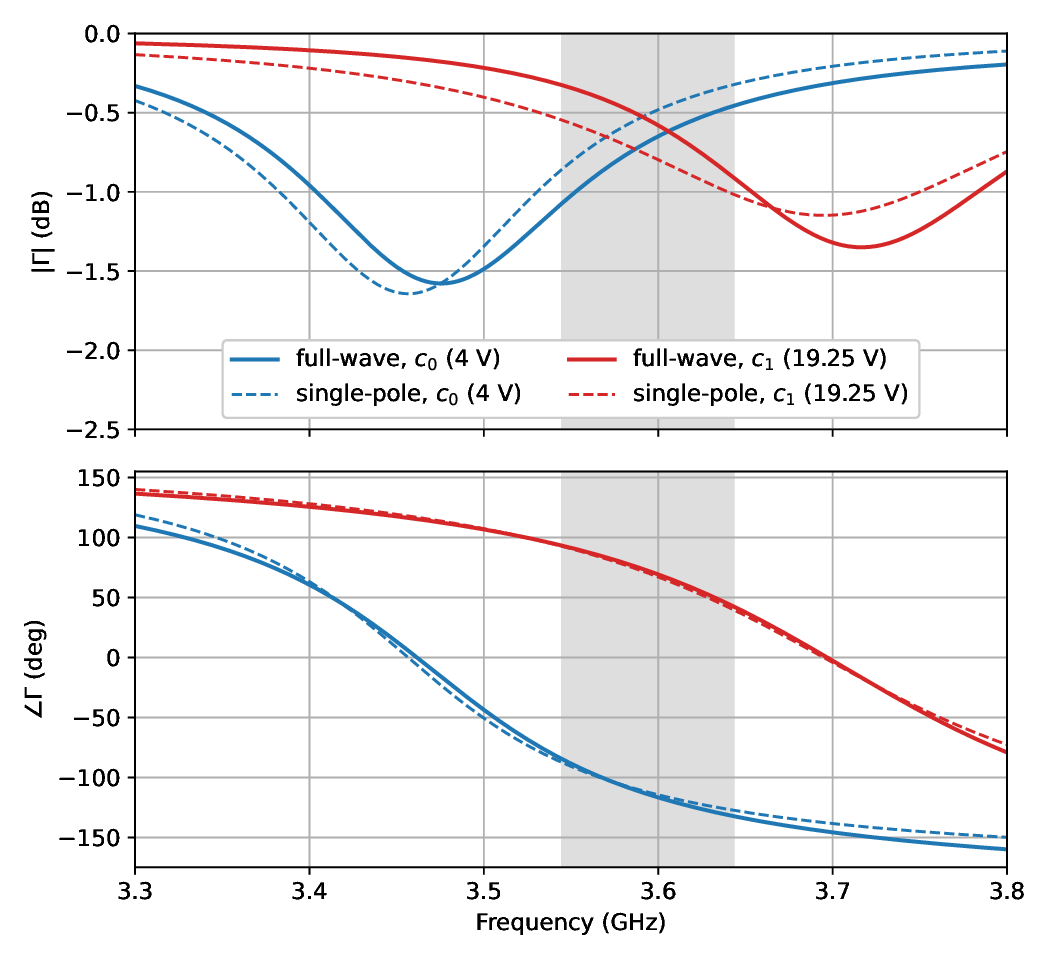}
\caption{Full-wave (solid) versus single-pole (dashed)
bare model \eqref{eq:polezero} for co-polar
reflection of the two control states over the n78 band;
the shaded area marks the selected $100$-MHz channel.}
\label{fig:fit-bare}
\end{figure}

The objective of this experiment 
is twofold: first, to assess the accuracy of the canonical 
single-pole model in~\eqref{eq:polezero}; 
and second, to verify the static phase renormalization predicted
in \eqref{eq:polezero-phi} of Appendix~\ref{sec:surf-response}.
Fig.~\ref{fig:fit-bare} compares the full-wave reflection
coefficient of the two control 
states with the bare single-pole model in~\eqref{eq:polezero}.
For each state, the three parameters
$\left(f_{0,q}^{[k]},\xi_{\text{r},q}^{[k]},
\xi_{\text{i},q}^{[k]}\right)$
are estimated by least-squares minimization of 
{the complex-valued error}
over the entire n78 band.
The parameters obtained utilizing the 
renormalization phase $\varphi_0$ model
are reported in Table~\ref{tab:fit},
together with the root-mean-square (RMS) errors of both models.

\begin{table}
\caption{Fitted single-pole parameters of the two control states,
obtained by joint fit of \eqref{eq:polezero-phi} over the n78 band,
with common $\varphi_0 = -10.7^\circ$.}
\label{tab:fit}
\centering
\begin{tabular}{lcc}
\toprule
 & $c_0$ ($4$~V) & $c_1$ ($19.25$~V) \\
\midrule
$f_{0,q}^{[k]}$ (GHz) & $3.471$ & $3.710$ \\
$\xi_{\text{r},q}^{[k]}/2\pi$ (MHz) & $97.5$ & $128.4$ \\
$\xi_{\text{i},q}^{[k]}/2\pi$ (MHz) & $8.9$ & $9.0$ \\
$Q_\text{L}$ & $16.3$ & $13.5$ \\
\midrule
RMS error, bare \eqref{eq:polezero} & $0.104$ & $0.044$ \\
RMS error, with $\varphi_0$ & $0.017$ & $0.023$ \\
\bottomrule
\end{tabular}
\end{table}

Two observations are in order. 
First, the intrinsic decay rate is essentially
independent of the bias voltage, with $\xi_{\text{i},q}^{[k]}/2\pi \approx 8.9$--$9.0$~MHz.
This is physically consistent with the fact that it accounts for 
dielectric losses and for the varactor series resistance,
both independent of the bias voltage.
By contrast, the bias voltage mainly affects 
the resonance frequency $f_{0,q}^{[k]}$ and the radiative decay 
rate $\xi_{\text{r},q}^{[k]}$. 
In both control states, the unit cell operates in
the strongly overcoupled regime
$\xi_{\text{r},q}^{[k]} \gg \xi_{\text{i},q}^{[k]}$, consistently with
the shallow magnitude notches and low reflection 
clearly visible in Fig.~\ref{fig:fit-bare}.
Second, the RMS error
of the bare model is $0.104$ and $0.044$ for the 
two states, respectively. 
Since the reflection magnitude is close to unity,
these values correspond approximately to relative errors of
$10.4\%$ and $4.4\%$.
Their order of magnitude is consistent with the 
$\mathcal{O}(1/Q_\text{L})$ accuracy floor discussed in
Section~\ref{sec:one-pole} and Appendix~\ref{sec:surf-response},
since the fitted quality factors correspond to $1/Q_\text{L} \approx 6.1$--$7.4\%$.
 
These results confirm that the dominant modeling error
is a static contribution, which can therefore be captured
by the normalized single-pole model \eqref{eq:polezero-phi}.
Since this contribution is nearly common to both control states,
its phase can be estimated through a single joint fit
across the two bias configurations.
\begin{figure}[t]
\centering
\includegraphics[width=\columnwidth]{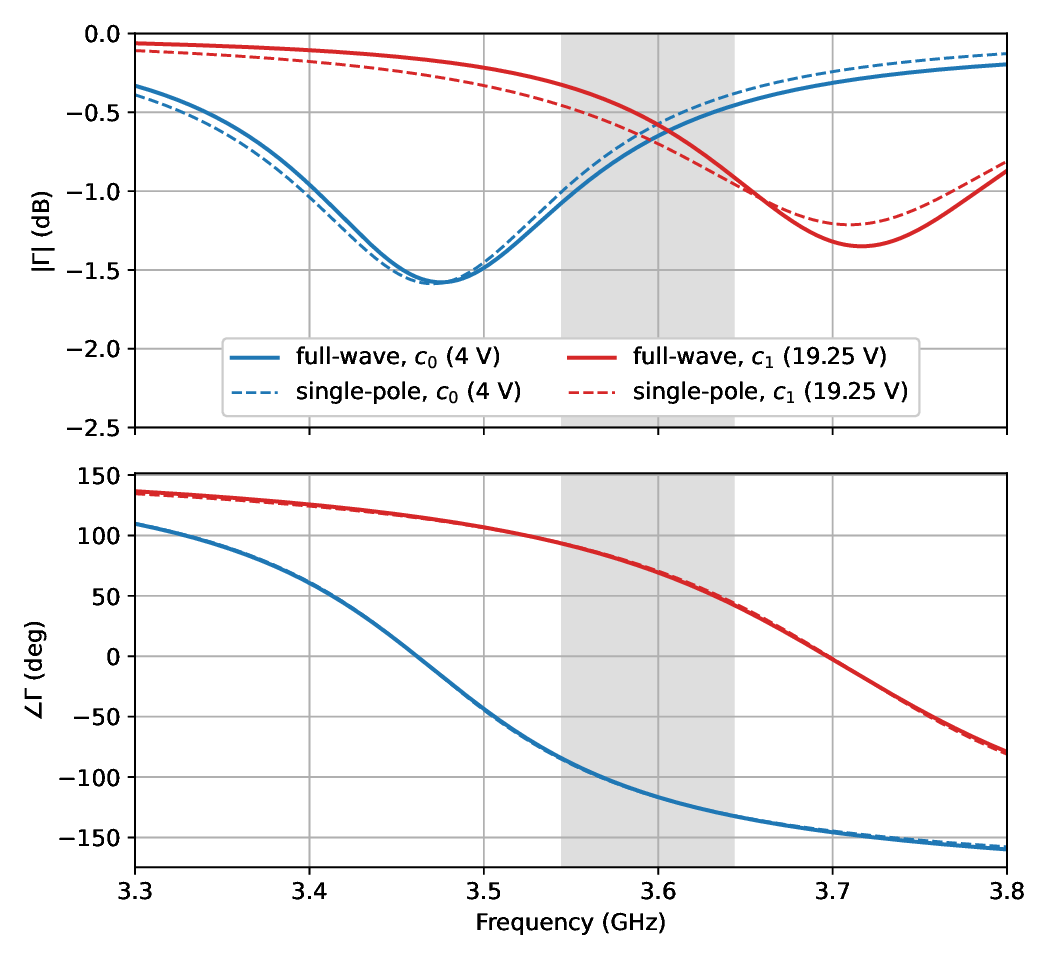}
\caption{Full-wave (solid) versus single-pole (dashed)
offset-aware model \eqref{eq:polezero-phi} for co-polar
reflection of the two control states over the n78 band;
the shaded area marks the selected $100$-MHz channel.}
\label{fig:fit-norm}
\end{figure}
Fig.~\ref{fig:fit-norm} shows the 
result of this joint fit, which
yields the renormalization phase 
$\varphi_0 = -10.7^\circ$ for the 
considered unit cell.
The modeled and full-wave amplitude responses
look quite similar, 
showing an excellent agreement in phase,
while the magnitude presents slight deviations
over the entire $500$-MHz band.
The corresponding RMS errors decrease to $0.017$ and $0.023$
(Table~\ref{tab:fit}), i.e., by factors of about $6$ and $2$,
while the resonant parameters change moderately
(about $15$~MHz, i.e., $0.4\%$, on $f_{0,q}^{[k]}$,
and up to $12\%$ on $\xi_{\text{r},q}^{[k]}$).
As predicted in Appendix~\ref{sec:surf-response}, the common
renormalization factor $e^{j\varphi_0}$ modifies neither the harmonic-coupling
pattern nor the effective memory duration of the RIS element.

 
\subsection{In-Channel Dispersion and Binary Switching}
 
\begin{figure}
\centering
\includegraphics[width=\columnwidth]{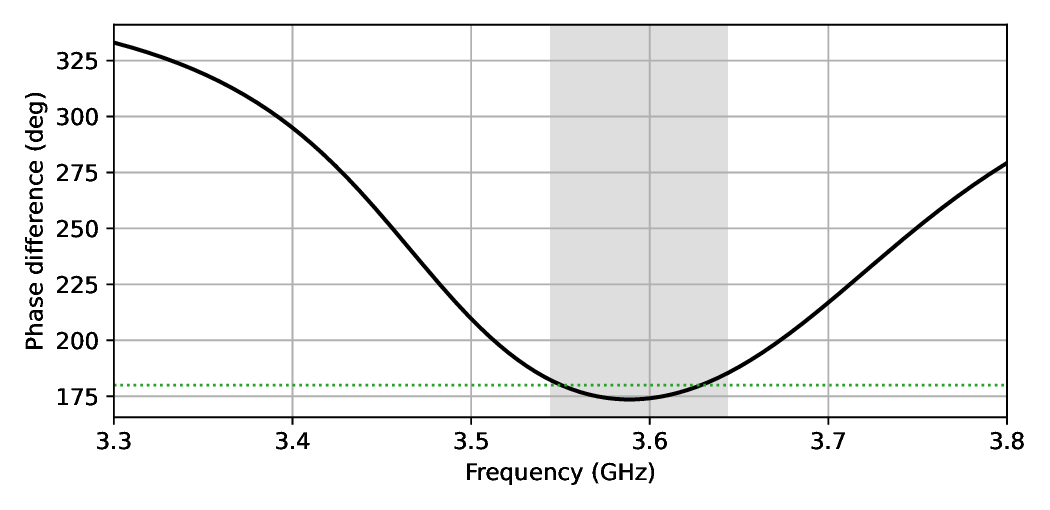}
\caption{Phase difference between the two control states across the n78
band. The shaded area marks the selected $100$-MHz channel, and the green dotted line indicates the ideal 180$^\circ$ response.}
\label{fig:dphi}
\end{figure}
 
We next focus on a nominal $100$-MHz 5G NR channel comprising 
$273$ resource blocks, each containing 12 subcarriers 
with $30$-kHz spacing, for a total 
occupied bandwidth of $98.28$~MHz.
The channel spans the $3.544$--$3.644$~GHz range,
with the occupied bandwidth extending over
$3.5449$--$3.6431$~GHz.
Its spectral position and 
the bias pair $(V_{c_0},V_{c_1})$
are jointly selected so that the two control states
realize a binary phase shift. 
As shown in Fig.~\ref{fig:dphi}, the
phase difference between the two states remains within
$180^\circ \pm 6.5^\circ$ over the entire occupied band.
At the same time, the inter-state magnitude imbalance 
does not exceed $0.75$~dB, and the
reflection loss remains below $1.1$~dB.
 
Despite their nearly ideal differential behavior, the
individual control states are strongly frequency selective.
Over the same $100$-MHz channel, 
their reflection phases vary
by $47^\circ$ and $50^\circ$, respectively, 
while each state
exhibits an in-band magnitude ripple of approximately
$0.6$~dB.
The nearly constant $180^\circ$ phase difference results from 
the two states operating on corresponding flanks of their respective resonances, 
where they exhibit similar phase slopes, or, equivalently, similar group delays.
Consequently, their individual dispersive phase variations
largely cancel when the difference between the two states is taken, although
they remain fully present in each state separately.
This is precisely the operating regime addressed 
by the proposed LPTV framework. 

The piecewise-constant control
\eqref{eq:piecewise} switches between two control states whose
frequency responses vary appreciably across the OFDM signal band.
Hence, each transmitted subcarrier experiences 
the RIS dispersion at its own absolute frequency, 
as dictated by \eqref{eq:io}.
At the same time, the harmonic-mixing coefficients 
in~\eqref{eq:fs-inv} are determined by the state-dependent 
pole parameters reported in Table~\ref{tab:fit}.


\subsection{Structure of the OFDM Coupling Operator}
\label{sec:operator-structure}

For $T=\Tu$ and $\Tg \ge \Delta\tau_{\text{max}}+\Tgamma$, 
the conditions of Theorem~\ref{thm:guard} are satisfied, 
and \eqref{eq:io} reduces to the linear
per-tone mapping $\mathbf{y}[n]=\Hb\,\mathbf{s}[n]+\mathbf{v}[n]$,
whose entries follow directly from the harmonic coefficients
in ~\eqref{eq:fs-inv} and are given by
\begin{equation}
\left\{\Hb\right\}_{\bar h,m}
= \frac{\Tu}{\Ts}\sum_{q=1}^{Q}
A_{\text{d},q}(\bar h)\,
\eb_{\bar h-m}^\herm\,
\Gammab_q\!\left(\tfrac{m}{\Tu}+\fc\right)
A_{\text{s},q}(m),
\label{eq:Hentry}
\end{equation}
where
$A_{\text{s},q}(m)\eqdef\sum_{\ell=1}^{L_{\text{s},q}} h_{\text{s},\ell,q}
e^{-j2\pi\frac{m}{\Tu}\tau_{\text{s},\ell,q}}$ and
$A_{\text{d},q}(\bar h)\eqdef\sum_{j=1}^{L_{\text{d},q}} h_{\text{d},j,q}
e^{-j2\pi\frac{\bar h}{\Tu}\tau_{\text{d},j,q}}$ are 
the frequency responses of the source-RIS and RIS--destination channels associated
with the $q$-th element, evaluated at the transmitted and received subcarrier 
frequencies, respectively.
With $K=2$, the corresponding slot duration is $\Tr=\Tu/2=16.67~\mu$s,
which exceeds $\Tgamma$ by more than three orders of magnitude, so that
the steady-state condition \eqref{eq:settle} of Remark~\ref{rem:steady-state}
is amply satisfied.

This expression
generalizes the conventional OFDM description
of a static RIS-aided link, in which the surface enters through
a single frequency-independent vector multiplying the cascaded propagation response
and the DFT matrix \cite[Eq.~(23)]{Bjornson2022}.
Two generalizations appear here: the surface response
is evaluated at the absolute frequency of each transmitted subcarrier,
and the operator is no longer diagonal,
the off-diagonal entries being indexed by the harmonic order.

Equation~\eqref{eq:Hentry} is the exact 
matrix representation of the per-subcarrier input--output
in~\eqref{eq:io} and therefore follows from it algebraically. 
A direct numerical comparison with a time-domain implementation of the same LPTV
model reproduces~\eqref{eq:Hentry} to machine precision, as
expected.
The physical component requiring validation against an
independent reference is the underlying single-pole unit-cell
response, which was benchmarked against full-wave results in
Section~\ref{sec:fit}.
The purpose of the present subsection is
instead to illustrate the structure of the OFDM coupling
operator predicted by~\eqref{eq:Hentry}. We subsequently
validate the main operational consequence of
Theorem~\ref{thm:guard}, namely the generalized CP
condition, in Section~\ref{sec:val-guard}.

For clarity, the following simulations consider a single RIS
element, i.e., $Q=1$. The results extend directly to $Q>1$:
the overall input--output operator is obtained by the linear
superposition in~\eqref{eq:Hentry}, whereas the effective RIS
memory is determined by the slowest-decaying element according
to~\eqref{eq:Tgamma}. The resulting memory then enters the
CP condition in~\eqref{eq:guard}. The complete set
of simulation parameters is reported in
Table~\ref{tab:sim-params}.

\begin{table}
\caption{Parameters used for 
the numerical validation of Theorem~\ref{thm:guard}.}
\label{tab:sim-params}
\centering
\begin{tabular}{lc}
\toprule
Parameter & Value \\
\midrule
Carrier frequency $\fc$ & $3.594$~GHz \\
5G NR channel & $3.544$--$3.644$~GHz (n78) \\
Channel bandwidth & $100$~MHz \\
Subcarrier spacing $\Delta f$ & $30$~kHz \\
Useful symbol duration $\Tu$ & $33.33~\mu$s \\
Active subcarriers $M$ (reduced / full) & $64$ / $3276$ \\
Control states / control period & $\{c_0,c_1\}$, $K=2$ \\
RIS modulation period $T=\Tu$ / slot $\Tr$ & $33.33~\mu$s, $16.67~\mu$s \\
Cell parameters & see Table~\ref{tab:fit} \\
Memory threshold $\chi$ & $10^{-4}$ \\
RIS memory $\Tgamma$ (worst state, $c_0$) & $13.8$~ns \\
\midrule
Source channel (hop 1) & 3GPP TDL-A~\cite{3gpp38901} \\
Destination channel (hop 2) & 3GPP TDL-C~\cite{3gpp38901} \\
RMS delay spread (per hop) & $30$~ns \\
Combined spread $\Delta\tau_{\text{max}}$ & $549$~ns ($540$ samples) \\
CP-sweep DFT length $N$ & $32768$ \\
Sampling frequency $\fs = N\Delta f$ & $983.04$~MHz \\
\bottomrule
\end{tabular}
\end{table}

Figure~\ref{fig:H-structure} shows the normalized magnitude
$(\Ts/\Tu)\,|\{\Hb\}_{\bar h,m}|$ for a reduced OFDM system
with $M=64$ active subcarriers and a 
binary control sequence $\{c_0,c_1\}$ corresponding to $K=2$. 
To isolate the contribution 
of the time-modulated dispersive RIS
in the structure of $\Hb$,
both propagation hops are assumed ideal, 
i.e., $h_{\text{s}}(\tau)=h_{\text{d}}(\tau)=\delta(\tau)$,
so that $A_{\text{s},q}(m)=A_{\text{d},q}(\bar h)=1$.
In Fig.~\ref{fig:H-structure},
the main diagonal represents 
the dispersive zero-order
reflection response $\eb_0^\herm \, \Gammab_q(m/\Tu+\fc)$, 
whereas  the off-diagonal entries describe 
the time-modulation harmonics of order
$h=\bar h-m$.
Their amplitudes follow the envelope
$|\sinc(h/K)|$ appearing in~\eqref{eq:fs-inv}.

For $K=2$, all nonzero even-order harmonics vanish, 
whereas the odd-order harmonics
decay in magnitude as their order increases.
The resulting approximately banded, on-grid structure, in which 
energy is exchanged between subcarriers at integer harmonic offsets, 
is the characteristic frequency-domain 
signature of a time-modulated dispersive RIS.

\begin{figure}
\centering
\includegraphics[width=0.90\columnwidth]{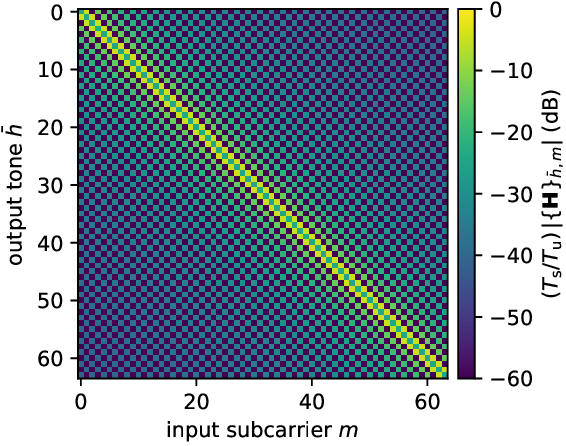}
\caption{Normalized magnitude $(\Ts/\Tu)\,|\{\Hb\}_{\bar h,m}|$
of the OFDM coupling matrix $\Hb$ predicted
by~\eqref{eq:Hentry} for a reduced system  with $Q=1$, $M=64$, 
binary control sequence $\{c_0,c_1\}$, $K=2$, and 
ideal single-tap channels.}
\label{fig:H-structure}
\end{figure}


\subsection{Validation of the Generalized CP Condition}
\label{sec:val-guard}

Finally, we validate the main operational consequence of
Theorem~\ref{thm:guard}, namely the generalized
CP condition \eqref{eq:guard}, according to 
which the CP must accommodate 
both the propagation-channel delay spread 
and the RIS memory.
%
The two propagation hops are modeled according 
to the tapped delay line profiles 
specified in 3GPP TR~38.901~\cite{3gpp38901}.
Specifically, TDL-A is adopted for the source--RIS link, whereas 
TDL-C (the reference non-line-of-sight wideband profile) is used 
for the RIS--destination link.
Both profiles are configured with an RMS delay spread of $30$~ns.
The resulting end-to-end propagation delay spread
is $\Delta\tau_{\text{max}}=549$~ns, corresponding to 
$540$ simulation samples at the sampling frequency
$\fs=N\Delta f=983.04$~MHz.

To isolate ISI, a single-subcarrier OFDM symbol 
is transmitted only during the preceding slot.
After CP removal and subcarrier demodulation,
we evaluate the energy leaking into the target slot 
and normalize it to the transmitted signal energy.
Fig.~\ref{fig:cp-guard} reports the resulting normalized residual ISI 
as a function of the CP length for the considered OpenRIS cell, whose
effective memory is $\Tgamma=13.8$~ns, and for
a higher-$Q_\text{L}$ variant with $\Tgamma=68.9$~ns.
The latter is a synthetic configuration,
obtained by scaling both decay rates in Table~\ref{tab:fit}
by a factor of $1/5$, with unchanged resonance frequencies
and propagation channel, rather than from an additional full-wave simulation;
this increases $Q_\text{L}$ to approximately $82$ and $68$.

Two operating regimes can be identified. 
For CP durations shorter than
$\Delta\tau_{\text{max}}$, the ISI decreases rapidly as 
an increasing  fraction of the propagation-channel response is
absorbed by the CP.
Once the CP exceeds $\Delta\tau_{\max}$, 
the remaining interference is mainly
determined by the exponentially 
decaying RIS response and
decreases at a rate governed by the corresponding pole.
The higher-$Q_\text{L}$ response decays five times more slowly and
therefore requires a proportionally longer guard interval. In
both cases, the generalized bound
$\Delta\tau_{\max}+\Tgamma$, indicated by the dashed lines,
identifies the CP duration at which the resonant tail has
decayed below the design threshold $\chi$. By contrast, the
conventional channel-only bound $\Delta\tau_{\max}$, indicated
by the dotted line, does not account for the residual RIS tail.
The comparison therefore isolates the additional guard
duration associated with the RIS memory.

For the considered n78 unit cell, however, 
$\Tgamma=13.8$~ns is much shorter
than both the channel delay
spread and the normal 5G NR CP, whose duration is
approximately $2.3~\mu$s for a $30$-kHz subcarrier spacing.
Thus, in this particular configuration, the CP requirement
remains dominated by the propagation channel, and no practical
extension beyond the standard 5G NR value is required.
The RIS-memory term becomes relevant for higher-$Q_\text{L}$ elements
or shorter OFDM symbols, for which $\Tgamma$ represents a
non-negligible fraction of the available guard interval. This
is the regime addressed by the generalized condition
in~\eqref{eq:guard}.

It is important to emphasize 
that this conclusion concerns \emph{only} the required
CP duration and does not justify a memoryless 
model of the RIS element.
Indeed, $\Tgamma$ is a threshold-based effective duration, 
related by \eqref{eq:Tgamma} to the smallest total decay rate
over the surface:
\begin{equation}
\frac{\min_{q,k}\,\xi_q^{[k]}}{2\pi}
=
\frac{\ln(1/\chi)}{2\pi\Tgamma}.
\label{eq:memory-bandwidth}
\end{equation}
For $\chi=10^{-4}$ and $\Tgamma=13.8$~ns, the corresponding
characteristic frequency scale is approximately $106$~MHz,
which is comparable to the $98.28$-MHz occupied bandwidth.
Consequently, a memory that is negligible relative to the CP
can still produce significant frequency selectivity across the
OFDM channel band.
In fact, over the selected n78 channel, the reflection phase of
the two control states varies by approximately $47^\circ$ and
$50^\circ$, respectively, whereas their magnitude responses
exhibit ripples of approximately $0.6$~dB. Moreover, the
differential response defining the binary phase-switching
operation varies by $11.5^\circ$ in phase and $1.2$~dB in
magnitude across the channel. A frequency-independent
reflection coefficient is therefore not an adequate model of
the considered unit cell, even though its memory does not
require an extension of the standard CP length.

Furthermore, this frequency selectivity is associated with
control states that change periodically every $\Tr$. The
overall RIS response is therefore genuinely LPTV, rather than
merely frequency-selective and time-invariant. This joint
behavior is captured by~\eqref{eq:Hentry}: each transmitted
subcarrier is weighted by the dispersive response evaluated at
its own absolute frequency, while the periodic reconfiguration
redistributes energy among the harmonic bands shown in
Fig.~\ref{fig:H-structure}.
Neither effect is removed by increasing the CP duration.
Theorem~\ref{thm:guard} instead guarantees that, once
\eqref{eq:guard} is satisfied, the received signal is free of
ISI and the modulation-induced frequency shifts remain aligned
with the OFDM grid. The resulting structured coupling can then
be addressed through joint subcarrier equalization at the
receiver or through suitable precompensation in the RIS control
design.

\begin{figure}
\centering
\includegraphics[width=\columnwidth]{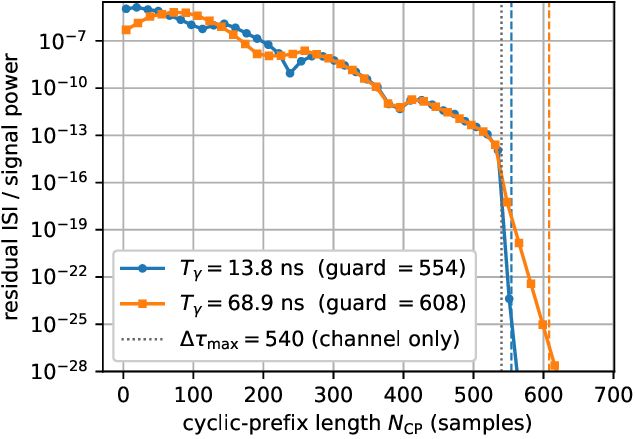}
\caption{Normalized residual ISI versus CP length $N_{\text{CP}}$
for the two-hop 3GPP channel, with a TDL-A source--RIS link and a
TDL-C RIS--destination link, each having a $30$-ns RMS delay
spread, over the considered 5G NR n78 channel.}
\label{fig:cp-guard}
\end{figure}


\section{Conclusion}
\label{sec:concl}

This paper developed a physically consistent LPTV model for
wideband OFDM transmission through a dispersive
time-modulated RIS. Unlike conventional models that describe
RIS elements either as frequency-selective LTI systems or as
instantaneous time-varying complex-valued reflection coefficients, 
the proposed
formulation jointly accounts for their frequency selectivity
and periodic reconfiguration.
A resonant equivalent-circuit representation was used to relate
the steady-state reflection coefficients and impulse-response
durations to physically meaningful parameters. 
The resulting harmonic representation preserves the frequency dependence of
each control state while explicitly describing the spectral
components generated by periodic switching.

Based on this model, we derived a closed-form OFDM
input--output relation showing that the received subcarriers are
coupled through the harmonics of the RIS response. This
coupling is deterministic and aligned with the OFDM frequency
grid when the RIS modulation period equals the useful OFDM
symbol duration. 
Frequency selectivity and harmonic mixing
cannot generally be separated, since the coupling between
a transmitted and a received subcarrier depends both
on their harmonic-index difference and on the absolute
frequency of the transmitted subcarrier. 
The familiar frequency-selective time-invariant and frequency-flat
time-modulated RIS models were recovered as special cases.

We also established a generalized CP condition
according to which the CP must accommodate the combined delay
spread of the two-hop propagation channel and the RIS impulse response.
Satisfying this condition eliminates ISI and out-of-grid
spectral leakage, but does not remove the structured
ICI intrinsically generated by RIS reconfiguration. 
Full-wave analysis of an OpenRIS unit cell
confirmed the accuracy of the offset-aware single-pole model
and revealed substantial in-band phase and magnitude variation,
even when the two control states provide an approximately
$180^\circ$ differential phase shift. For the considered 5G NR
configuration, the RIS memory was too short to require an
extension of the standard CP length, whereas simulations with a
higher-$Q_\text{L}$ response confirmed that the additional memory term
becomes relevant as the resonator relaxation time increases.

These results show that an RIS may be negligible as an
additional source of CP overhead while still being strongly
frequency selective over the signal bandwidth. Consequently,
the absence of a CP penalty does not justify modeling the
surface as an instantaneous frequency-flat phase shifter. The
structured nature of the resulting coupling suggests 
developing joint subcarrier equalization, transmitter
precoding, and RIS control strategies that explicitly exploit
the LPTV model. Extensions to multi-resonant elements,
polarization-coupled responses, mutual coupling among elements,
nonideal switching transients, and experimental over-the-air validation
constitute relevant directions for future work.


\appendices

\section{Canonical Dispersive Unit Cell}
\label{sec:surf-response}

Throughout this appendix, we consider a generic RIS element
in a generic control state and drop the indices
$q$ and $k$; accordingly, $f_0$, $\xi_{\text{r}}$,
$\xi_{\text{i}}$, $\xi$, and $\Delta$ stand for
$f_{0,q}^{[k]}$, $\xi_{\text{r},q}^{[k]}$,
$\xi_{\text{i},q}^{[k]}$, $\xi_q^{[k]}$,
and $\Delta_q^{[k]}$, respectively.

Following \cite{Costa2021}, 
the three branches of \eqref{eq:Zin} 
can be modeled by the following
impedances:
\begin{align}
Z_d^{\text{pol}}& =jZ_{\text{sub}}^{\text{pol}}\tan(k_zd),
\quad \text{pol} \in \{\text{TE},\text{TM}\}\\
Z_{\text{patch}}^{\text{pol}} & =
R_{\text p}+\frac{1}{j\omega 
C_\text{p}^{\text{pol}}(\theta)}, \quad
\text{pol} \in \{\text{TE},\text{TM}\}\\
Z_{\text{var}} &=R_\text{v}+
j\omega L_\text{v}+\frac{1}{j\omega C_\text{v}} \: .
\label{eq:branches}
\end{align}
where 
$Z_{\text{sub}}^{\text{TE}}=\frac{\omega\mu_0}{k_z}$,
$Z_{\text{sub}}^{\text{TM}}=\frac{k_z}{\omega\epsilon_0\epsilon_r}$.
Here, $d$ and $\epsilon_r$ denote the thickness and relative
permittivity of the dielectric substrate, respectively,
while $\epsilon_0$ and $\mu_0$ are the free-space permittivity and
permeability, respectively, and $k_z\eqdef\omega\sqrt{\mu_0\,\epsilon_0}\sqrt{\epsilon_r-\sin^2\theta}$.
In the patch branch, $R_{\text p}$ accounts for the effective
Ohmic losses of the metallic pattern, whereas
$C_{\text p}^{\text{pol}}(\theta)$ is its effective
polarization- and angle-dependent capacitance. The varactor is
represented by a series RLC circuit, where $R_{\text v}$ and
$L_{\text v}$ model its series resistance and package
inductance, respectively, while  $C_{\text v}$ is its tunable
junction capacitance. The latter is controlled through the
applied reverse-bias voltage and determines the corresponding
steady-state control configuration.

All the frequency-domain quantities entering the equivalent
circuit, including the material constitutive parameters and
the lumped-element impedances, are Hermitian symmetric, since
their corresponding time-domain responses are real valued.
Both series and parallel interconnections preserve this
property. Moreover, the reflection-coefficient map
in~\eqref{eq:Gmap} preserves conjugate symmetry because the
free-space wave impedance is real. Consequently, for either
polarization,
\begin{equation}
\Gamma^{\text{pol}}(-f,\theta)
=
\left[
\Gamma^{\text{pol}}(f,\theta)
\right]^*,
\qquad
\text{pol}\in\{\text{TE},\text{TM}\}.
\label{eq:herm}
\end{equation}
Thus, the complete circuit model is conjugate symmetric by
construction. In particular, each steady-state response
$\Gamma_q^{[k]}(f)$ satisfies this property and, since the
switching functions in~\eqref{eq:piecewise} are real valued,
the overall time-varying response obeys
$\Gamma_q(-f,t)=\Gamma_q^*(f,t)$, consistently with the
symmetry used to derive~\eqref{eq:bb}.

Recalling the assumptions introduced in 
Section \ref{sec:one-pole}:
\begin{itemize}

\item[(\textbf{\ref{ass:norm})}]
\emph{Normal incidence and symmetric unit cell:}
$\theta=0$, so that the TE and TM responses
coincide and $Z_0=\zeta_0$, thereby eliminating the 
angular and polarization dependence.

\item[\textbf{(\ref{ass:thin-sub})}]\emph{Electrically thin substrate:}
$k_z d \ll 1$, so that $\tan(k_zd)\approx k_z d$ and
$Z_d \approx j\omega L$, with effective inductance $L=\mu_0 d$.

\item[\textbf{(\ref{ass:cap-load})}]\emph{Capacitively loaded sheet:}
$\omega L_\text{v}\ll1/(\omega C_\text{v})$ and negligible
series reactive loss, so that $Z_{\text{patch}}\|Z_{\text{var}}\approx1/(j\omega C)$,
with total capacitance $C=C_\text{p}+C_\text{v}$.

\item[\textbf{(\ref{ass:loss})}]\emph{Lumped losses:} 
the dielectric, ohmic and varactor losses
are represented by an equivalent shunt conductance $G\ge0$.

\item[\textbf{(\ref{ass:narrow})}]\emph{Narrowband signals:}
$|\omega-\omega_0|\ll\omega_0$, invoked at the linearization step below.
\end{itemize}
Under such assumptions, the equivalent surface input
admittance takes the parallel-resonant Lorentz form
\begin{equation}
Y(\omega)=G+\frac{1}{j\omega L}+j\omega C \: .
\label{eq:Ylor}
\end{equation}
Since $G$, $L$, and $C$ are real valued, \eqref{eq:Ylor}
satisfies $Y(-\omega)=Y^*(\omega)$ and, therefore, preserves the
conjugate symmetry of the complete reflection response
in~\eqref{eq:herm}.

Let us introduce the resonance angular frequency 
$\omega_0 \eqdef 1/\sqrt{LC}$,
under the narrowband assumption (\ref{ass:narrow}), i.e.,
$|\omega-\omega_0|\ll\omega_0$ one obtains
$\omega C-1/(\omega L) \approx 2C(\omega-\omega_0)$
and $Y\approx G+j2C(\omega-\omega_0)$.
Moreover, define the radiative 
(external) and intrinsic (internal) 
rates as
\begin{equation}
\xi_{\text{r}}=\frac{1}{2Z_0C},\qquad
\xi_{\text{i}}=\frac{G}{2C},\qquad
\xi\eqdef\xi_{\text{r}}+\xi_{\text{i}} \: .
\label{eq:rates}
\end{equation}
Substituting in~\eqref{eq:Ylor} and factoring
$\mp2Z_0C$ in numerator and denominator yields
the single-resonance pole--zero reflection coefficient:
\begin{equation}
\Gamma(\omega)
=-\,\frac{j(\omega-\omega_0)+\xi_{\text{i}}-\xi_{\text{r}}}{j(\omega-\omega_0)+\xi_{\text{i}}+\xi_{\text{r}}}
=-1+\frac{2\xi_{\text{r}}}{j(\omega-\omega_0)+\xi}.
\label{eq:pole-zer-app}
\end{equation}
The leading term $-1$ in \eqref{eq:pole-zer-app} 
represents the wave reflection from the grounded substrate,
whereas the second term is the resonant contribution 
associated with a Lorentzian pole of 
half-power half-width $\xi$, centered at $\omega_0$.
Their superposition produces the resonance notch in the magnitude of
the reflection coefficient.

The loaded quality factor and its
external/internal decomposition 
follow from~\eqref{eq:rates} as
\begin{equation}
Q_\text{L}=\frac{\omega_0}{2\xi}=\left(\frac{1}{Q_{\text r}}+\frac{1}{Q_{\text i}}\right)^{-1},
\label{eq:Q}
\end{equation}
where  the radiative and intrinsic quality factors are,
respectively,
\begin{align}
Q_{\text r}=\frac{\omega_0}{2\xi_r}=Z_0\sqrt{\frac{C}{L}} \: , \\
Q_{\text i}=\frac{\omega_0}{2\xi_i}=\frac{1}{G}\sqrt{\frac{C}{L}} \: .
\end{align}
Note that the narrowband assumption is precisely the condition
$\xi\ll\omega_0$, i.e., $Q_{\text L}\gg1$.
 
By retaining only the resonance near $+\omega_0$,
the linearization loses the Hermitian 
symmetry preserved up to~\eqref{eq:Ylor}.
Indeed, the single-pole model~\eqref{eq:polezero} yields
\begin{equation}
\Gamma(-\omega) = -1 + \frac{2\xi_{\text{r}}}{-j(\omega+\omega_0) + \xi} 
\neq\Gamma^{*}(\omega) \: .
\end{equation}
Hermitian symmetry can be restored by including the mirror resonance
centered at $-\omega_0$, which leads to the
Hermitian completion
\begin{equation}
\Gamma_{\text H}(\omega)=-1+\frac{2\xi_{\text{r}}}{j(\omega-\omega_0)+\xi}
+\frac{2\xi_{\text{r}}}{j(\omega+\omega_0)+\xi} \: .
\label{eq:Hcomplete}
\end{equation}
It follows directly that  
$\Gamma_{\text H}(-\omega)=\Gamma_{\text H}^{*}(\omega)$ holds identically.
Equivalently, the inverse transform of~\eqref{eq:Hcomplete} is the real-valued
passband impulse response
\begin{equation}
\gamma_{\text{bp},\text{H}}(t) = -\delta(t)
+4 \, \xi_{\text{r}}\cos(\omega_0t)\,e^{-\xi t}\,u(t).
\label{eq:gH}
\end{equation}
It is worthwhile to observe that
the resonance frequency $f_0 \eqdef \frac{\omega_0}{2\pi}$
does not coincide with the carrier frequency. In the binary phase-shifter 
regime considered here, $f_0$ is placed close to, 
but generally not exactly at $\fc$. 
Indeed, at exact resonance the reflection coefficient
collapses to the maximum absorption
$\Gamma(f_0)=-1+2\xi_{\text{r}}/\xi=(\xi_{\text{r}}-\xi_{\text{i}})/(\xi_{\text{r}}+\xi_{\text{i}})$ (zero at critical coupling
$\xi_{\text{r}}=\xi_{\text{i}}$).
The unit cell is instead biased to operate on a 
resonance flank, where the phase slope remains steep while 
the reflection magnitude $|\Gamma|$
remains close to unity. 
In the following analysis, only the scale separation
$|\Delta|\ll \fc$, with $\Delta\eqdef f_0-\fc$, is required; 
the resonance is not assumed to coincide with the carrier.

The real-valued impulse response~\eqref{eq:gH} 
is a {passband} response
characterized by two distinct timescales: 
a rapid oscillation at $f_0$ and an
exponential decay with relaxation time $1/\xi$.
Their separation is determined by
the loaded quality factor~\eqref{eq:Q},
since $f_0/\xi=Q_{\text L}/\pi$.
Since the positive- and negative-frequency 
poles in~\eqref{eq:Hcomplete} have the same decay rate,
the one-pole and two-pole descriptions yield the same effective memory duration
$\Tgamma$. The mirror pole restores the physical carrier
content and Hermitian symmetry of the passband response, but
does not introduce an additional relaxation timescale.

The per-subcarrier relation~\eqref{eq:io} 
samples the surface response only at the
in-band absolute frequencies $f=\nu+\fc$, 
with $0\le\nu\le B\ll \fc$ and $B\approx M/\Tu$.
Evaluating the two terms of~\eqref{eq:Hcomplete} there, the near ($+\omega_0$) term is
resonant, i.e., it reaches the flank/peak set by the detuning,
centered at baseband frequency $\nu=\Delta=f_0-\fc$,
while the conjugate ($-\omega_0$) term,
\begin{equation}
\frac{2\xi_r}{j2\pi(\nu+\fc+f_0)+\xi}\;\approx\;\frac{2\xi_r}{j2\pi(\fc+f_0)},
\qquad|\nu|\ll \fc,
\label{eq:conj}
\end{equation}
is slowly varying and bounded, relative to the prompt term, by
\begin{equation}
\frac{2\xi_r}{2\pi(\fc+f_0)}
\lesssim\frac{\xi}{\pi(\fc+f_0)}\approx\frac{1}{2Q_{\text L}},
\label{eq:bound}
\end{equation}
where the last step uses $\fc+f_0\approx2f_0$ and $\xi=\pi f_0/Q_{\text L}$.
Moreover, the conjugate pole sits at $\nu\approx-(\fc+f_0)$, far below the band: within
$0\le\nu\le B$ the term~\eqref{eq:conj} is on the smooth tail of that 
distant pole and
varies only by a relative $\mathcal{O}(B/\fc)$. Indeed, one has
\begin{multline}
\left|\frac{g_-(\nu)-g_-(0)}{g_-(0)}\right|
=\frac{2\pi\nu}{\big|\,j2\pi(\nu+\fc+f_0)+\xi\,\big|}\\
\le\frac{\nu}{\nu+\fc+f_0}\le\frac{B}{\fc+f_0},
\label{eq:flat}
\end{multline}
where $g_-(\nu)$ denotes~\eqref{eq:conj} and the overall rate $\xi$.
Hence, the term is essentially constant across the band:
it merely renormalizes the prompt reflection, i.e., 
the $\delta(t)$ term in~\eqref{eq:gH},
and adds neither resonance nor memory.
The in-band selectivity and the finite memory~\eqref{eq:Tgamma}
are therefore set by the near ($+\omega_0$) 
pole alone, whose resonance lies within the
band at $\nu=\Delta$,
and the one-sided (single-pole) model represents the
cell with relative error $\mathcal{O}(1/Q_{\text L})$.
It may thus be used for $\Gamma^{[k]}_q(f)$
in~\eqref{eq:fs-inv}--\eqref{eq:io}.

The above analysis shows that the terms neglected by the single-pole reduction,
namely the conjugate-pole contribution and the residual background
of the grounded slab beyond the thin-substrate approximation,
are not dispersive over the signal band. Their variation is only
$\mathcal{O}(B/\fc)$, so that, to the order of the model,
they act as a static multiplicative factor.
Accordingly, the aggregate contribution of these terms
can be absorbed into a constant renormalization 
phase factor multiplying the reduced response,
\begin{equation}
\Gamma_q^{[k]}(f) \approx e^{j\varphi_0}\,
\Gamma_{\text{pole},q}^{[k]}(f),
\label{eq:polezero-phi}
\end{equation}
where $\Gamma_{\text{pole},q}^{[k]}(f)$ is given by \eqref{eq:polezero}.
The phase offset $\varphi_0$ is common to the control states
only within $\mathcal{O}(1/Q_\text{L})$;
the state dependence enters through the conjugate-pole term
and is of that order.
Being common to all control responses, $e^{j\varphi_0}$
simply scales the Fourier coefficients
in \eqref{eq:fs}--\eqref{eq:fs-inv}
without affecting the memory \eqref{eq:Tgamma}
or the guard-interval condition of Theorem~\ref{thm:guard}.


\section{Proof of Theorem \ref{thm:guard}}
\label{sec:app1}

The inner integral factor in \eqref{eq:twocol}
represents in the time domain a cascade
of two rectangular impulse responses
(the OFDM symbol window $\Ts$ and the discrete Fourier transform (DFT) window $\Tu$)
convolved with the surface impulse response
$\gammab_q(t)$.

Let us introduce
\begin{equation}
\boldsymbol{z}(t)
\eqdef
{\gammab}_q(t) * g(t)
= \int_{0}^{\Tgamma}
{\gammab}_q(\tau)
\, g(t-\tau) \, d\tau,
\label{eq:inner-int}
\end{equation}
where ${\gammab}_q(t) \eqdef \left[
{\gamma}_q^{[0]}(t),{\gamma}_q^{[1]}(t),
\ldots,{\gamma}_q^{[K-1]}(t)\right]^\trasp \in \Cset^K$
and
\begin{equation}
g(t) \eqdef \frac{1}{\Ts\Tu} \left[\rect\left(\frac{t}{\Ts}\right)
e^{j 2 \pi \frac{m}{\Tu} t}\right] *
\left[\rect\left(\frac{t}{\Tu}\right) e^{j 2 \pi
\left(\frac{\bar{h}}{\Tu} - \frac{h}{T}\right) t}\right] \: .
\end{equation}
It can be readily shown that
\begin{equation}
g(t) \equiv 0, \quad \forall t \not\in
\left[-\Tu-\frac{\Tg}{2}, \Tu+\frac{\Tg}{2}\right].
\end{equation}
Furthermore,
\begin{equation}
g(t) = \frac{1}{\Ts} \, \sinc\left[ \Tu \left(
\frac{\bar h}{\Tu} - \frac{h}{T} - \frac{m}{\Tu}
\right)\right] e^{j2 \pi \frac{m}{\Tu} t},
\end{equation}
for all $|t| \leq \Tg/2$.
For $n=i$, from \eqref{eq:inner-int} and the inner-integral term in
\eqref{eq:twocol}, we require
\begin{equation}
-\frac{\Tg}{2} \leq \frac{\Tg}{2}
- (\tau_{\text{s},\ell,q}+\tau_{\text{d},j,q}-\tau_{\text{min}}+\tau)
\leq \frac{\Tg}{2},
\end{equation}
where $\tau \in [0,\Tgamma)$.
The right-hand condition is satisfied by definition.
The left-hand condition becomes
\begin{equation}
\Tg \geq \tau_{\text{s},\ell,q}+\tau_{\text{d},j,q}-\tau_{\text{min}}+\tau,
\end{equation}
whose most stringent form is
\begin{equation}
\Tg \geq \Delta\tau_{\text{max}} + \Tgamma.
\end{equation}
Under this condition, it follows that all contributions corresponding to
$n\neq i$ vanish and%
\begin{multline}
\boldsymbol{z}(t) =
\frac{1}{\Ts} \, \sinc\left[ \Tu \left(
\frac{\bar h}{\Tu} - \frac{h}{T} - \frac{m}{\Tu}
\right)\right]  \\
\times 
\Gammab_q\left(\frac{m}{\Tu}+\fc\right)
e^{j2 \pi \frac{m}{\Tu} t} \: ,
\label{eq:int-solved}
\end{multline}
for all $|t| \leq \Tg/2$.
If $T=\Tu$, one has that
\begin{equation}
\sinc\left[ \Tu \left(
\frac{\bar h}{\Tu} - \frac{h}{T} - \frac{m}{\Tu}
\right)\right] = \delta_{h-(\bar{h}-m)} \: .
\label{eq:kron}
\end{equation}
Substituting \eqref{eq:int-solved} and \eqref{eq:kron}
into \eqref{eq:twocol}, one obtains \eqref{eq:io}.


\bibliographystyle{IEEEtran}
\bibliography{biblio_v2}

\end{document}